\title{Comparing the Manipulability of Approval Voting and Borda\thanks{This work is supported by the Russian Science Foundation under grant 20-71-00029.}}
\author{Daria Teplova\inst{1} \and Egor Ianovski\inst{2}}
\institute{ITMO University,\\
Kronverkskiy Prospekt 49, St Petersburg, 197101, Russia
\and
HSE University,\\
Ulitsa Soyuza Pechatnikov 16, St Petersburg, 190121, Russia}
\newcommand{\CV}{\mathcal{V}}
\newcommand{\CC}{\mathcal{C}}
\newcommand{\CL}{\mathcal{L}}
\newcommand{\ria}{\rightarrow}
\newcommand{\set}[1]{\{\,#1\,\}}
\newcommand{\ps}{\geq_{\mathcal{PS}}}
\newcommand{\pss}{>_{\mathcal{PS}}}
\newcommand{\xps}{\times_{\mathcal{PS}}}
\newcommand{\nps}{\ngeq_{\mathcal{PS}}}
\newcommand{\ceil}[1]{\lceil#1\rceil}
\newcommand{\floor}[1]{\lfloor#1\rfloor}
\newcommand{\score}{\text{score}}
\begin{document}

\maketitle

\begin{abstract}
The Gibbard-Satterthwaite theorem established that no non-trivial voting rule is strategy-proof, but that does not mean that all voting rules are equally susceptible to strategic manipulation. Over the past fifty years numerous approaches have been proposed to compare the manipulability of voting rules in terms of the probability of manipulation, the domains on which manipulation is possible, the complexity of finding such a manipulation, and others. In the closely related field of matching, \cite{Pathak2013} pioneered a notion of manipulability based on case-by-case comparison of manipulable profiles. The advantage of this approach is that it is independent of the underlying statistical culture or the computational power of the agents, and it has proven fruitful in the matching literature. In this paper, we extend the notion of Pathak and S\"onmez to voting, studying the families of $k$-approval and truncated Borda scoring rules. We find that, with one exception, the notion does not allow for a meaningful ordering of the manipulability of these rules.
\vskip3pt

{\bf Keywords:} social choice, strategic voting, Borda, scoring rules
\end{abstract}

\section{Introduction}

Strategy-proofness -- the idea that it is in an agent's interest to reveal their true preferences -- is a fundamental desideratum in mechanism design. All the other properties a mechanism may have become suspect if we can not assume that an agent will play according to the rules. Unfortunately, in the field of voting, this is a property we have to live without -- per the Gibbard-Satterthwaite theorem (\cite{Gibbard1973}; \cite{Satterthwaite1975}), the only strategy-proof rule with a range consisting of more than two candidates is dictatorship. This negative result, however, did not mean that scholars were willing to give up on either voting or resistance to strategy. Instead, the search was on for a workaround.

 It was already known that the impossibility does not hold on restricted domains \cite{Dumett1961}, and if the preferences of the voters are separable \cite{Barbera1991} or single-peaked \cite{Moulin1980}, then natural families of strategy-proof voting rules exist. For those committed to the universal domain, there was the statistical approach -- all rules may be manipulable, but it could be the case that some are more manipulable than others. This lead to a voluminous literature on manipulation indices, that sought to quantify how likely a voting rule is to be manipulable (\cite{Nitzan1985}; \cite{Kelly1993}; \cite{Aleskerov1999}). With the incursion of computer science into social choice, an approach based on computational complexity came into prominence -- the idea being, if a strategic vote is computationally infeasible to find, that is almost as good as there being no strategic vote in the first place (\cite{Bartholdi1989}; \cite{Conitzer2007}; \cite{Walsh2011}).
 
 None of these approaches were entirely convincing. Domain restrictions are by nature arbitrary, and there is little point in arguing as to how natural single-peaked preferences may be, if no real-world example actually is \cite{Elkind2017}. Manipulation indices are sensitive to the choice of the statistical culture, and are usually obtained by means of computer simulations for particular choices of the number of voters and candidates; so while an index could tell us which voting rule is more manipulable under, say, impartial culture with four voters and five candidates, it would be a stretch to extrapolate from this to statements about the manipulability of a voting rule in general. Computational complexity focuses on the worst case of finding a strategic vote, and a high worst-case complexity does not preclude the possibility of the problem being easy in any practical instance \cite{Faliszewski2011}.

In the closely related field of matching, \cite{Pathak2013} proposed a method to compare the manipulability of mechanisms that seemed to sidestep all these issues -- mechanism $f$ is said to be more manipulable than $g$ if the set of profiles on which $g$ is manipulable is included in the set of profiles on which $f$ is manipulable. No restrictions on domain, statistical culture, or computational ability is required. In the appendix to their paper, Pathak and S\"onmez theorised how the approach could be extended to other areas of mechanism design. This was taken up by the matching community (\cite{Decerf2021}; \cite{Bonkoungou2021}), but to our knowledge the only authors to apply this approach to voting were Arribillaga and Mass\'o (\cite{Arribillaga2016}; \cite{Arribillaga2017}). However, the notion used by Arribillaga and Mass\'o differed from that of Pathak and S\"onmez. A l\`a Pathak and S\"onmez, we would say that a voting rule $f$ is more manipulable than $g$ just if:
$$\forall P: \text{$g$ is manipulable at $P$}\Rightarrow\text{$f$ is manipulable at $P$}.$$
In other words, if a voter can manipulate $g$ in profile $P$, then a voter can also manipulate $f$ in the same profile. Arribillaga and Mass\'o's notion is a bit harder to parse:
$$\forall P_i: (\exists P_{-i}: \text{$g$ is manipulable at $(P_i,P_{-i})$})\Rightarrow (\exists P_{-i}': \text{$f$ is manipulable at $(P_i,P_{-i}')$}).$$
That is, if there exists a possible \emph{preference order} of voter $i$, $P_i$, such that there exists \emph{some} profile extending $P_i$, in which the voter can manipulate $g$, then there also exists a \emph{possibly different} profile extending $P_i$, in which the voter can manipulate $f$. To see why this could be an issue, recall that a voting rule is neutral if for any permutation of candidates $\pi$, $f(\pi P)=\pi f(P)$. A neutral voting rule is \emph{always} manipulable under the definition above -- by the Gibbard-Satterthwaite theorem, there exists some profile $P$ where a voter can manipulate. If we pick an appropriate permutation of candidate names, we will obtain a manipulable profile where voter $i$'s preference order is any order we want. Indeed, the papers of Arribillaga and Mass\'o deal with the manipulability of median voter schemes \cite{Moulin1980} and voting by committees \cite{Barbera1991}, both of which are fundamentally non-neutral procedures.

Most voting rules, however, are intended to capture democratic values, and so are neutral at their core; neutrality is typically relaxed only for the purposes of tie-breaking. The notion of Arribillaga and Mass\'o is inappropriate in this setting. The purpose of this paper is to see whether the original notion of Pathak and S\"onmez is any better.

\subsection{Our contribution}

We apply the manipulability notion of Pathak and S\"onmez to the families of $k$-approval and truncated Borda voting rules. We find that all members of the $k$-approval family are incomparable with respect to this notion, while in the truncated Borda family, in the special case of two voters, $(k+1)$-Borda is more manipulable than $k$-Borda; all other members are incomparable.

\section{Preliminaries}

Let $\CV$, $|\CV|=n$, be a set of voters, $\CC$, $|\CC|=m$, a set of candidates,  and $\CL(\CC)$ the set of linear orders over $\CC$. Every voter is associated with some $\succeq_i\in\CL(\CC)$, which denotes the voter's preferences. A profile $P\in\CL(\CC)^n$ is an $n$-tuple of preferences, $P_i$ is the $i$th component of $P$ (i.e.\ the preferences of voter $i$), and $P_{-i}$ the preferences of all the other voters.

A voting rule is a mapping:
$$f:\CL(\CC)^n\ria \CC$$

Note two consequences of the definition above. First, the number of voters and candidates is integral to the definition of a voting rule. I.e., for the purposes of this paper the Borda rule with $n=3, m=4$ is considered to be a different voting rule from the Borda rule with $n=4, m=3$. This is why our results meticulously consider every combination of $n$ and $m$ in detail.

Second, since we are requiring the voting rule to output a single candidate, we are assuming an inbuilt tie-breaking mechanism. For the purposes of this paper, all ties will be broken lexicographically. Capital letters will be used to denote candidates with respect to this tie-breaking order. That is, in a tie between $A$ and $B$ the tie is broken in favour of $A$. In the case of subscripts, ties are broken first by alphabetical priority then by subscript. That is, in the tie $\set{A_3,A_5,B_1}$, the winner is $A_3$ since $A$ has priority over $B$ and $3$ is smaller than 5.

In cases where we do not know a candidate's position in the tie-breaking order, we denote the candidate with lower case letters. Thus, if the tie is $\set{a,b,c}$, we cannot say who the winner is, and must proceed by cases.

We study two families of voting rules:

\begin{definition}
$k$-approval, denoted $\alpha_k$, is the voting rule that awards one point to a candidate each time that candidate is ranked in the top $k$ positions of a voter. The highest scoring candidates are the tied winners, ties are broken lexicographically.

$k$-Borda, denoted $\beta_k$, is the voting rule that awards $k-i+1$ points to a candidate each time that candidate is ranked $i$th, $i\leq k$. The highest scoring candidates are the tied winners, ties are broken lexicographically.

The corner case of $\alpha_1=\beta_1$ is also known as the plurality rule, while $\beta_{m-1}$ is known as the Borda rule.

Both families are special cases of \emph{scoring rules}, under which a candidate gains $s_i$ points each time it is ranked $i$th. Under $k$-approval, $s_1=\dots=s_k=1$, while under $k$-Borda $s_i=\max(0,k-i+1)$.
\end{definition}

We will be comparing the $k$-approval and $k$-Borda families of voting rules via the notion of manipulability pioneered by \cite{Pathak2013}.

\begin{definition}[Pathak and S\"onmez, 2013]
Let $f,g$ be two voting rules. We say that $f$ is manipulable at profile $P$ just if there exists a voter $i$ and a preference order $P_i'$ such that:
$$f(P_i',P_{-i})\succ_i f(P_i,P_{-i}).$$

We say that $f$ is more manipulable than $g$, denoted $f\ps g$, just if, for every profile $P$, if $g$ is manipulable at $P$ then so is $f$.

$f\pss g$ is shorthand for $f\ps g$ and $g\nps g$, and $f\xps g$ is shorthand for $f\nps g$ and $g\nps f$.
\end{definition}
\section{$k$-Approval family}

In this section we fix $i<j$. Our end result (\autoref{thm:approval}) is that for any $n,m$, $\alpha_i\xps\alpha_j$.

\begin{proposition}\label{lemLaij}
For all $n,m$: $\alpha_{i} \nps \alpha_{j}$.
\end{proposition}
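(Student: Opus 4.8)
The plan is to produce, for each $n$ and $m$, a single profile $P$ at which $\alpha_j$ can be manipulated but $\alpha_i$ cannot; this is exactly $\alpha_i \nps \alpha_j$. (When $j \geq m$ the rule $\alpha_j$ is constant and nothing needs proving, so assume $1 \leq i < j \leq m-1$.) Let $A$ be the candidate with top tie-breaking priority, $B$ the one with second priority, and $c_1=A,\ c_2=B,\ c_3,\dots,c_m$ the candidates in tie-breaking order. I would first record a cheap sufficient condition for $\alpha_i$-immunity: \emph{if no voter ranks above $A$ any candidate whose $\alpha_i$-score is at least that of $A$, then $\alpha_i$ is not manipulable at $P$.} Indeed, any candidate that no voter ranks above $A$ can enter a voter's top $i$ only together with $A$, so it never out-scores $A$; hence $A$ is the $\alpha_i$-winner. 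A single voter's deviation shifts every score by at most $1$. If a voter $v$ has $A$ in its top $i$, it also has every candidate it prefers to $A$ in its top $i$, so it cannot raise any of their scores, whereas $A$ drops by at most $1$; by hypothesis each of those candidates was strictly below $A$, so it is at best tied with $A$ afterwards and loses on priority. If $v$ does not have $A$ in its top $i$, then $v$ cannot lower $A$'s (maximal) score, while any candidate some voter prefers to $A$ was strictly below $A$ and can therefore be lifted at most into a tie with $A$. Either way $A$ keeps winning, so no voter gains.

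For $n \geq 2$ I would take $P$ as follows: voter $1$ reports $B \succ A \succ c_3 \succ c_4 \succ \dots \succ c_m$ (tie-breaking order with $A$ and $B$ swapped), and voters $2,\dots,n$ all report tie-breaking order with $B$ moved down into slot $i+1$, i.e.\ $A \succ c_3 \succ \dots \succ c_{i+1} \succ B \succ c_{i+2} \succ \dots \succ c_m$. Under $\alpha_i$ the only candidate any voter ranks above $A$ is $B$, which scores $1$, whereas $A$ scores $n$ (or $n-1$ when $i=1$); so the observation gives $\alpha_i$-immunity, except in the single corner where $i=1$ and $n=2$, in which $A$ and $B$ tie for the plurality lead and one checks directly that voter $1$ cannot unseat $A$ with any first-place report while voter $2$ already gets its favourite. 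Under $\alpha_j$, however, $A$ and $B$ now both lie in every voter's top $j$ and hence both score $n$, so $A$ wins only by priority; voter $1$ prefers $B$, and by reporting instead $B \succ c_3 \succ \dots \succ c_{j+1} \succ A \succ c_{j+2}\succ\dots\succ c_m$ (so that $A$ falls out of the top $j$) it keeps $B$ at $n$ while dropping $A$ to $n-1$. As $B$ has second priority, no candidate can beat it once $A$ has fallen, so $B$ wins --- a profitable manipulation of $\alpha_j$.

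For $n = 1$ the same idea works in one line: let the sole voter report $B \succ c_3 \succ \dots \succ c_{i+1} \succ A \succ c_{i+2} \succ \dots \succ c_m$, placing $A$ just outside the top $i$ but inside the top $j$. Then $\alpha_i$ returns the voter's favourite $B$, so $\alpha_i$ is immune; while $A$, having top priority, wins under $\alpha_j$, and the voter --- who prefers $B$ --- gets $B$ by pushing $A$ below slot $j$.

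The part I have skated over --- and where the write-up actually has to work --- is verifying that these profiles behave as claimed for \emph{every} admissible $(n,m,i,j)$ simultaneously: computing all the $\alpha_i$- and $\alpha_j$-scores in general, confirming that after voter $1$'s deviation it is genuinely $B$ rather than some filler $c_k$ that inherits the win (forced, since the fillers sit below $B$ in priority order), and dealing with the exceptional case $i=1$, $n=2$. I expect the verification of $\alpha_i$-immunity in that last corner, where the clean observation just fails to apply, to be the main --- though modest --- obstacle.
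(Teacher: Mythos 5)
Your argument is correct, but it takes a different route from the paper's. Both proofs do the only thing one can do here---exhibit, for each $n,m$, a single witness profile where $\alpha_j$ is manipulable and $\alpha_i$ is not---but the constructions differ. You use two distinguished candidates $A,B$ plus fillers in tie-breaking order, and you pay for the $\alpha_i$-immunity with a general lemma (if no candidate that any voter ranks above $A$ scores at least as much as $A$, then $\alpha_i$ is unmanipulable), which in turn forces separate treatment of $n=1$ and of the corner $i=1$, $n=2$. The paper instead builds the profile from blocks of clones: $i$ candidates $B_1,\dots,B_i$ that \emph{every} voter places in the top $i$ positions, $j-i$ candidates $A_1,\dots,A_{j-i}$ filling positions $i+1,\dots,j$ (reversed for one voter), and $m-j$ fillers. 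Under $\alpha_i$ the unanimous favourite $B_1$ wins, so immunity is immediate with no lemma and no case split on $n$ or $i$; under $\alpha_j$ all $A$ and $B$ candidates tie at $n$, $A_1$ wins by priority, and the voter who ranks $A_1$ last among the $A$'s demotes it below position $j$. Your immunity lemma is sound (a voter holding $A$ in its top $i$ already maximally supports everything it prefers to $A$ and can cost $A$ at most one point; a voter not holding $A$ cannot hurt $A$ at all, and $A$ has top priority in every tie), and your $\alpha_j$ deviation does hand the win to $B$ by priority, so the proof goes through; it is simply longer than necessary. One small quibble: dismissing $j\geq m$ with ``nothing needs proving'' is backwards---a constant $\alpha_j$ would be nowhere manipulable, making $\alpha_{i}\ps\alpha_{j}$ hold vacuously and the proposition \emph{false}; that case is excluded only because $k$-approval is considered for $k<m$ (the paper's own proof likewise silently needs $m>j$ so that a $C_1$ exists to swap in).
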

\begin{proof}
Consider a profile with $i$ $B$ candidates, $j-i$ $A$ candidates, and $m-j$ $C$ candidates.

\[
\begin{array}{|c|c|c|c|}
\hline
     n-1\text{ voters of type 1} & B_{1} \ldots B_{i}  & A_{1}  \ldots A_{j-i}& C_{1} \ldots C_{m-j}  \\
\hline
     1\text{ voter of type 2} &  B_{1} \ldots B_{i}  & A_{j-i}  \ldots A_{1}& C_{1} \ldots C_{m-j}  \\
     \hline
\end{array}
\]

In $\alpha_{j}$ all $A$ and $B$ candidates are tied by score, and $A_{1}$ wins the tie. The voter of type 2 can swap $A_{1}$ for $C_{1}$. This will lower the score of $A_1$ from $n$ to $n-1$, while the other $A$ and $B$ candidates still have $n$. If $j-i>1$, the winner will be $A_2$. If $j-i=1$, the winner will be $B_1$. In either case, the outcome will be better for the manipulator.

    In $\alpha_{i}$ $B_1$ wins, so every voter gets his best outcome. No one has an incentive to manipulate.
\end{proof}

\begin{lemma}\label{lem:alphabigm}
For $n=2q, m\geq 2j-1$: $\alpha_{j} \nps \alpha_{i}$.
\end{lemma}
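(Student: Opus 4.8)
The plan is to build a single profile $P$ at which $\alpha_i$ is manipulable but $\alpha_j$ is not; since that is exactly the negation of $\alpha_j\ps\alpha_i$, it delivers $\alpha_j\nps\alpha_i$. I describe the construction for $i\geq 2$ and note the cosmetic change for $i=1$ at the end. \emph{The profile.} Split the $n=2q$ voters into two blocks of $q$. Under $i$-approval the first block will approve a set $S_1$ of $i$ candidates and the second block a \emph{disjoint} set $S_2$ of $i$ candidates. Extend these to top-$j$ ballots by adjoining to $S_1$ the candidate $A$ together with $j-i-1$ fresh candidates $a_1,\dots,a_{j-i-1}$, and to $S_2$ the same $A$ together with $j-i-1$ fresh candidates $c_1,\dots,c_{j-i-1}$, where $A$ is the candidate first in the tie-breaking order and all named candidates are distinct. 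This uses $i+i+1+(j-i-1)+(j-i-1)=2j-1$ candidates, which is exactly what $m\geq 2j-1$ permits; any leftover candidates go to the bottom of every ranking. A block-1 ranking lists $S_1$ first (with a distinguished candidate $Y_2\in S_1$ placed above $Y_1\in S_1$, where $Y_1,Y_2$ are the two members of $S_1\cup S_2$ earliest in the tie-break order — we name candidates so that both lie in $S_1$), then $A$, then the $a_*$'s, then the rest; a block-2 ranking lists $S_2$ first, then $A$, then the $c_*$'s, then the rest. The structural fact to keep in mind is that in each voter's ranking the candidates strictly above $A$ are precisely that voter's top $i$.

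\emph{$\alpha_i$ is manipulable at $P$.} Because $S_1\cap S_2=\emptyset$, under $i$-approval every candidate in $S_1\cup S_2$ scores $q$ and every other candidate scores $0$, so the winner is $Y_1$. If $q\geq 2$, a block-1 voter drops $Y_1$ from its top $i$ and substitutes any candidate outside $S_1\cup S_2$ (one exists, as $|S_1\cup S_2|=2i<2j-1\leq m$): then $Y_1$ falls to $q-1$, the substitute scores only $1<q$, the remaining members of $S_1\cup S_2$ still score $q$, so the winner becomes the next-earliest such candidate, $Y_2$, which the voter strictly prefers to $Y_1$. If $q=1$, a block-2 voter instead drops one element of $S_2$ and inserts $A$; now $A$ scores $1$, ties the other score-$1$ candidates, and wins the tie as it is first in the tie-break order, and $A$ is ranked by that voter above all of $S_1\ni Y_1$, so again this is a strict gain.

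\emph{$\alpha_j$ is not manipulable at $P$, and conclusion.} Under $j$-approval $A$ lies in everyone's top $j$, so $\score(A)=2q$, while any other candidate lies in at most one of the two top-$j$ sets and so scores at most $q$; hence $A$ wins (by a margin of $q\geq 2$ if $q\geq 2$, and by tie-break if $q=1$). Take any voter $w$ and any alternative ballot. Every candidate $w$ prefers to $A$ already lies in $w$'s top $i$, hence in $w$'s current top $j$, so $w$ cannot lift its score above $q$; and $w$ can lower $\score(A)$ by at most one, to $2q-1\geq q$. So no candidate $w$ prefers to $A$ can overtake $A$ and $w$ has no profitable deviation; thus $P$ witnesses $\alpha_j\nps\alpha_i$. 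When $i=1$, where $S_1$ cannot hold two candidates, make the plurality stage pivotal instead by taking $q$ voters with top choice $Y$, $q-1$ with top choice $Z$ and one extra voter, the names arranged so that $Y$ wins the plurality tie but the extra voter can move its vote and change the winner to something it prefers (e.g.\ to $A$); placing $A$ second in every ranking leaves the $\alpha_j$ argument above untouched. The main obstacle is meeting the two opposing demands simultaneously — $\alpha_i$ decided by a single ballot, $\alpha_j$ decided with margin to spare — which is resolved by the candidate $A$, invisible to $i$-approval yet dominant under $j$-approval: the bound $m\geq 2j-1$ is precisely the room the two blocks' top-$j$ ballots need in order to overlap in the lone candidate $A$, and non-manipulability of $\alpha_j$ comes down to the remark that a voter can never boost a candidate it already ranks above the incumbent winner.
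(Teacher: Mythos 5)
Your main construction (the case $i\geq 2$) is correct, and it is essentially the paper's idea in different clothing: a consensus candidate $A$, first in the tie-breaking order, sitting just outside everyone's top $i$ but inside everyone's top $j$, with $m\geq 2j-1$ used exactly to let the two blocks' top-$j$ sets meet only at $A$. The one real difference is the manipulation of $\alpha_i$: the paper makes $A$ the top choice of $q-1$ voters so that a single compromising voter can lift $A$ into a winning tie for every $q$ and every $i$ (including $i=1$), whereas you use a tie-break burial of $Y_1$ in favour of $Y_2$ for $q\geq 2$ and a separate promotion of $A$ for $q=1$. Both work; the paper's version is more uniform. (Minor slip: under $\alpha_j$ with $q=1$, $A$ wins by points, $2$ to at most $1$, not by tie-break; the tie-break is only needed after a deviation. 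This is harmless since your inequality $2q-1\geq q$ together with $A$'s tie-break priority is what actually carries the argument.)

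The genuine gap is the case $i=1$, which the lemma as stated must cover. Your sketch does not go through as written. With $q$ voters topping $Y$, $q-1$ topping $Z$ and one extra voter, there is no "plurality tie" for $q\geq 2$: $Y$ wins outright with $q$ points, and the extra voter switching to $A$ gives $A$ only $1$ point, so the proposed manipulation "e.g.\ to $A$" fails; the extra voter must instead vote for $Z$ to force a tie at $q$, which requires arranging that $Z$ precedes $Y$ in the tie-break and that the extra voter prefers $Z$ to $Y$ -- none of which is set up. Moreover, "placing $A$ second in every ranking leaves the $\alpha_j$ argument untouched" is not justified: that argument needs every candidate a voter prefers to $A$ to be score-capped at $q$ (or at least at $2q-1$ with $A$ winning ties), which depends on how the remaining $j-2$ slots of each ballot are filled; and a naive filling uses $Y,Z,W,A$ plus $2(j-2)$ fillers $=2j$ distinct candidates, one more than $m=2j-1$ guarantees, so some overlap must be engineered and then re-checked. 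The case is certainly repairable -- the cleanest fix is to do what the paper does and make $A$ the unique top choice of $q-1$ voters, so that $A$ has $q-1$ plurality points and one voter's switch to $A$ creates a tie at $q$ that $A$ wins, an argument that is indifferent to whether $i=1$ or $i\geq 2$ -- but as submitted the $i=1$ branch is asserted rather than proved.
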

\begin{proof}
The profile consists of $j-1$ $B$ candidates, $j-1$ $C$ candidates, one $A$ candidate, and $m-2j+1$ $D$ candidates. The condition that $m\geq 2j-1$ guarantees that the top $j$ candidates of voters of types 1 and 2 intersect only at $A$:

\[
\begin{array}{|c|c|c|c|}
\hline
     q\text{ voters of type 1} & B_{1} \ldots B_{i} & B_{i+1} \ldots B_{j-1}A  &  C_1\dots C_{j-1}D_1\dots D_{m-2j+1}\\
\hline
     q-1\text{ voters of type 2} & AC_1\dots C_{i-1} & C_{i} \ldots C_{j-1}  &  B_1\dots B_{j-1}D_1\dots D_{m-2j+1}\\
     \hline
     1\text{ voter of type 3} & C_{1} \ldots C_{i} & C_{i+1} \ldots C_{j-1}A  & B_1\dots B_{j-1}D_1\dots D_{m-2j+1}\\
     \hline
\end{array}
\]

 Under $\alpha_i$, $A$ has $q-1$ points, $B_1$ through $B_{i}$ and $C_1$ through $C_{i-1}$ have $q$, $C_i$ has 1. The winner is $B_1$. The voter of type 3 can manipulate by swapping $A$ for $C_1$. $A$ will have $q$ points and will beat $B_1$ in the tie.
  
  Under $\alpha_j$, $A$ has $n$ points and is the winner. Voters of type 2 get their best choice elected. A voter of type 1 would rather have a $B$ candidate win, but $A$ has a lead of at least one point on these. Moving $A$ below the $j$th position will only drop the score by one, and $A$ will win the tie. The voter of type 3 would rather see a $C$ candidate win, but $A$ has a lead of at least one. Dropping $A$'s score will at most force a tie, which $A$ will win.
\end{proof}

\begin{lemma}\label{aoddsmall}
For $n=2q+1, m\geq 2j-1$: 
if  $i\geq 2$, then $\alpha_{j} \nps \alpha_{i}$.
\end{lemma}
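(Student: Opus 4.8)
The plan is to prove $\alpha_j \nps \alpha_i$ by exhibiting one profile $P$ on $n = 2q+1$ voters at which $\alpha_i$ is manipulable while $\alpha_j$ is not; this is the odd-$n$ analogue of \autoref{lem:alphabigm}, and I would obtain it by perturbing that construction. Reuse the candidate set of \autoref{lem:alphabigm}: a distinguished candidate $A$, a block $B_1, \dots, B_{j-1}$, a block $C_1, \dots, C_{j-1}$, and $m-2j+1$ filler candidates $D_1, \dots, D_{m-2j+1}$; the hypothesis $m \geq 2j-1$ is exactly what makes room for all of these, and it forces the top-$j$ block $\{B_1, \dots, B_{j-1}, A\}$ of a type-1 ballot and the top-$j$ block $\{A, C_1, \dots, C_{j-1}\}$ of a type-2 or type-3 ballot to overlap only in $A$. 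Reuse the three ballot orders of \autoref{lem:alphabigm} as well, but now take $q$ voters of type 2 in place of $q-1$, so that $P$ has $q$ voters of type 1, $q$ of type 2, and one of type 3, for a total of $2q+1$. As in that proof, every voter places $A$ inside their top $j$.

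The two halves of the argument then go as follows. Under $\alpha_j$, since $A$ lies in every voter's top $j$ it attains the maximal score $n$, and no other candidate does; hence the winner can change only if the deviating voter pushes $A$ out of their own top $j$ (dropping $A$ to $n-1$), and in that case the candidate that moves up into that voter's top $j$, thereby gaining a point, is ranked below $A$ by that voter, so any new winner it creates is worse for the voter, while if $A$ survives nothing is gained. Hence $\alpha_j$ is not manipulable at $P$. Under $\alpha_i$, a short count of the scores gives $A$ and $B_1, \dots, B_i$ exactly $q$ points each, $C_1, \dots, C_{i-1}$ exactly $q+1$ points each (this is the block for which $i \geq 2$ is needed, so that it is non-empty), $C_i$ exactly one point, and every remaining candidate none; so $C_1$ is the $\alpha_i$-winner and $A$ trails it by exactly one. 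A type-1 voter sincerely prefers $A$ to $C_1$ and does not have $A$ in their top $i$, so by moving $A$ to the front of their ballot they raise $A$ to $q+1$, whereupon $A$ wins the resulting tie with $C_1$ by the lexicographic tie-break; this deviation is profitable, so $\alpha_i$ is manipulable at $P$. Together these two facts give $\alpha_j \nps \alpha_i$.

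The step that needs care rather than ideas is the $\alpha_i$ bookkeeping: because $2q+1$ does not split evenly, one must check that the single type-3 voter lifts the $C$-block to precisely $q+1$, one above the $B$-block and above the post-deviation score of $A$, while no $C$-, $B$-, or $D$-candidate climbs higher and spoils either the identity of the $\alpha_i$-winner or the success of the manipulation; this has to be confirmed for every $i<j$ and every $m \geq 2j-1$, and in particular the boundary case $q=1$ (that is, $n=3$, where there are no spare voters of any type) should be checked directly. The hypothesis $i \geq 2$ is used at exactly one point --- the manipulation requires some $C$-candidate strictly ahead of $A$ under $\alpha_i$, which the profile produces only because $C_1, \dots, C_{i-1}$ is non-empty; for plurality ($i=1$) the profile degenerates, with $A$ already winning $\alpha_i$, so that case falls outside the scope of this lemma and must be handled separately.
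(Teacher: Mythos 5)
Your construction is correct, but it is not the one the paper uses, so a brief comparison is in order. The paper's proof of this lemma abandons the profile of \autoref{lem:alphabigm} and builds a fresh one with $j$ $B$-candidates and $j-1$ $A$-candidates, in which $B_1$ wins under \emph{both} rules; manipulability of $\alpha_i$ comes from a type-2 voter burying $B_1$ to break the $(q+1)$-point tie among $B_1,\dots,B_i$ in favour of $B_2$ (this is where $i\geq 2$ enters there). You instead keep the candidate structure and ballots of \autoref{lem:alphabigm}, promote the deficient voter type from $q-1$ to $q$ copies, and get a profile where the $\alpha_j$-winner is $A$ but the $\alpha_i$-winner is $C_1$, with the manipulation being a type-1 voter lifting $A$ into their top $i$ to win the tie against $C_1,\dots,C_{i-1}$ (this is where your $i\geq 2$ enters). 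Both are single-witness arguments resting on the same mechanism for $\alpha_j$-immunity -- a candidate in every voter's top $j$ has an unassailable score of $n$ -- and your version has the mild advantage of unifying Lemmas \ref{lem:alphabigm} and \ref{aoddsmall} under one construction. Your score bookkeeping checks out: $A$ and $B_1,\dots,B_i$ at $q$, $C_1,\dots,C_{i-1}$ at $q+1$, $C_i$ at $1$ under $\alpha_i$; and $A$ at $2q+1$, the $C$-block at $q+1$, the $B$-block at $q$ under $\alpha_j$.

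One point you should make explicit in the $\alpha_j$ half: after a deviator drops $A$ to $2q$, you rule out new winners that \emph{gain} a point (they are ranked below $A$ by the deviator) and the case where $A$ survives, but you do not explicitly rule out a candidate winning \emph{without} gaining a point. This requires noting that every non-$A$ candidate has at most $q+1\leq 2q$ points, with equality exactly when $q=1$, in which case $A$ still wins the tie because it precedes the $C$-candidates lexicographically. The boundary case $n=3$ you flag for the $\alpha_i$ side is actually unproblematic there; it is on the $\alpha_j$ side that $q=1$ forces this tie-breaking appeal. This is a one-line addition, not a flaw in the approach.
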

\begin{proof}
The profile consists of $j$ $B$ candidates, $j-1$ $A$ candidates, and $m-2j+1$ $C$ candidates. The condition that $m\geq 2j-1$ guarantees that the top $j$ candidates of voters of types 1 and 2 intersect only at $B_1$:
\[
\begin{array}{|c|c|c|c|}
\hline
     q\text{ voters of type 1} & \overbrace{A_{1} \ldots A_{i}}^{\geq 2} &  B_{1}A_{i+1}\dots A_{j-1}&  B_2\dots B_jC_1\dots C_{m-2j+1} \\
\hline
     q\text{ voters of type 2} & B_{i} \ldots B_{1} &B_{i+1}\ldots  B_{j}  &  A_1\dots A_{j-1}C_1\dots C_{m-2j+1} \\
     \hline
     1\text{ voter of type 3} & B_{1} \ldots B_{i} & B_{i+1}\ldots  B_{j}  &A_1\dots A_{j-1} C_1\dots C_{m-2j+1} \\
     \hline
\end{array}
\]

Under $\alpha_i$, the winner is $B_1$ with $q+1$ points. An voter of type 2 can swap $B_1$ with $B_j$, and $B_2$ will win (since $i\geq 2$, $B_1$ and $B_2$ are necessarily distinct).

Under $\alpha_j$, $B_1$ wins with $n$ points. The other $B$ candidates have at most $n-1$ points, and the $A$ candidates have at most $n-2$. A voter of type 2 would rather see another $B$ candidate win, but such a voter can only lower the score of $B_1$ by 1, and $B_1$ will win the tie against any $B$ candidate. A voter of type 1 would rather see an $A$ candidate win, but $B_1$ has a lead of at least two points, so will beat the $A$ candidates by points even if ranked last.
\end{proof}

\begin{lemma}\label{lem:aione}
For all $n,m$: if $i=1$,  then $\alpha_{j} \nps \alpha_{i}$.
\end{lemma}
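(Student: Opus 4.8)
The plan is to exhibit, for every relevant pair $n,m$ (recall $1=i<j\le m-1$), a single profile $P$ at which plurality $\alpha_1$ is manipulable but $\alpha_j$ is not; since $\alpha_j\nps\alpha_1$ asks only for one profile where $\alpha_1$ is manipulable while $\alpha_j$ is not, this suffices. The engine for the $\alpha_j$ half is a robustness lemma I would state and prove first: \emph{if $A$ (the candidate of highest tie-breaking priority) lies in the top $j$ positions of every voter, and every candidate other than $A$ that lies in the top $j$ positions of all voters is ranked below $A$ by every voter, then $\alpha_j$ is not manipulable at $P$.} The proof is short. At such a $P$, $A$ has the maximal score $n$ and wins; a deviating voter $v$ who still reports $A$ inside the top $j$ changes nothing, while if $v$ drops $A$ then $A$ falls to $n-1$, so any candidate $w\ne A$ that ends up winning must have reached score $n$ (else $A$, still at $n-1$, wins on priority). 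Such a $w$ must then lie in the true top $j$ of all voters other than $v$ and in $v$'s reported top $j$; hence either $w$ lay in everybody's true top $j$, whence $w\prec_v A$ by hypothesis, or $v$ has promoted $w$ from outside its own true top $j$, whence $w$ is below everything $v$ truly places in the top $j$ and again $w\prec_v A$. Either way the new winner is not strictly preferred by $v$ to $A$.

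It then remains to design, for each $n,m$, the first places and a few parked candidates so that (i) plurality is manipulable and (ii) the hypothesis of the robustness lemma holds. I would put $A$ in position $1$ or $2$ on every ballot, so that the only candidates threatening the robustness hypothesis are first choices. For $n\ne 3$ the plurality skeleton is: $s$ voters rank $B$ first, $s-1$ rank $A$ first, and the remaining $n-2s+1\ge 1$ rank $C$ first, where $\lceil(n+1)/3\rceil\le s\le\lfloor n/2\rfloor$ --- a range that is nonempty precisely when $n\ne 3$. Then $B$ wins sincerely, but a $C$-voter given the ballot $C,A,\dots,B,\dots$ can switch to $A$ and make $A$ the winner, so $\alpha_1$ is manipulable. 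Placing $A$ second on every non-$A$-headed ballot, parking $B$ below the top $j$ of some non-$B$-headed voter and $C$ below the top $j$ of some non-$C$-headed voter (possible since $m-j\ge 1$), and padding with fillers, makes $B,C$ non-universal while every universal candidate other than $A$ sits below the position-$2$ copy of $A$; the robustness lemma then gives that $\alpha_j$ is not manipulable. For $n=3$ the $s$-recipe dies, and I would treat $m\ge 4$ with three voters having distinct first choices $B,C,D$ and $A$ second (parking each of $B,C,D$ below the top $j$ of one other voter): plurality elects $B$, the $C$-voter switches to $A$, and $\alpha_j$ is again pinned by the robustness lemma. The residual case $n=3,m=3$ (so $j=2$) I would dispatch by hand via the Condorcet-cycle profile $(A,B,C),(B,C,A),(C,A,B)$: under plurality all three tie and $A$ wins, yet the $B$-voter can switch to $C$ (whom he prefers to $A$); under $\alpha_2$ all three again tie, $A$ wins on priority, and neither the $B$-voter nor the $C$-voter can elevate anyone above $A$ without elevating a candidate he ranks below $A$.

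The step I expect to fight hardest is verifying the robustness hypothesis in the anti-plurality corner $j=m-1$ with $n$ small, where each ballot's top $j$ is ``all candidates but one'' and there is simply no room to park $B,C$ (or the distinct first choices) below the top $j$. There I would choose the non-$A$ first choices to double as other voters' vetoed (last-ranked) candidates --- a cyclic first-choice/veto assignment --- so that every candidate other than $A$ that is never vetoed is also never a first choice and hence lies below the position-$2$ copy of $A$ for everyone, which is exactly what the robustness lemma needs, while the distinct first choices still produce a knife-edge, manipulable plurality race. Checking that this can always be arranged, and grinding through the handful of remaining $(n,m,j)$ shapes with $n$ or $m-j$ small, is where the meticulous bookkeeping lives; but the conceptual content is just the robustness lemma together with the $n\ne 3$ plurality skeleton above.
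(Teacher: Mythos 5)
Your proof is correct, and its skeleton --- a profile in which the plurality winner is decided by a tie among first choices that some voter can tip toward $A$, while under $\alpha_j$ an untouchable candidate sits in everyone's top $j$ --- is the same device the paper uses; the differences are organizational. You extract the $\alpha_j$ half into a reusable robustness lemma (top-priority $A$ in everyone's top $j$, every other universally top-$j$ candidate below $A$ on every ballot), which is a clean abstraction of the verification the paper performs by hand, and your proof of that lemma is sound. The price of insisting that the immovable candidate be the top-priority $A$ is that your plurality skeleton needs $\lceil(n+1)/3\rceil\le s\le\lfloor n/2\rfloor$, which fails at $n=3$ and forces the separate three-voter and Condorcet-cycle cases; the paper instead splits on the parity of $n$ and, in the odd case, lets the immovable $\alpha_j$ winner be $B$ rather than $A$, defending it by a points lead rather than by tie-breaking priority, which handles $n=3$ uniformly. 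Your unexecuted bookkeeping all checks out: each voter has $m-j\ge 1$ positions below the top $j$, and since $A$ occupies position $1$ or $2$ everywhere, only the first-place candidates ever sit above $A$, so only those ($B$ and $C$, or the three first choices when $n=3$) need to be parked outside someone's top $j$, and two or three voters suffice for that even in the anti-plurality corner $j=m-1$. One pedantic note: your $s$-range is also empty at $n=1$, but that case is vacuous for both you and the paper, since no scoring rule is manipulable by a lone voter.
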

\begin{proof}

The profiles consist of candidates $A,B,C$, and $D_1$ through $D_{m-3}$.

Case one: $n$ even, $n=2q$.

\[
\begin{array}{|c|c|c|c|}
\hline
     q\text{ voters of type 1} &  B&AD_1\dots D_{j-2}&C  D_{j-1}\dots D_{m-3} \\
\hline
     q-1\text{ voters of type 2} & A&BD_1\dots D_{j-2}  &CD_{j-1}\dots D_{m-3}   \\
     \hline
     1\text{ voter of type 3} & C &AD_1\dots D_{j-2}&BD_{j-1}\dots D_{m-3}  \\
     \hline
\end{array}
\]

Under $\alpha_i$, $B$ has $q$ points, $A$ has $q-1$, and $C$ has 1. If $q\geq 2$, $B$ wins by score, and if $q=1$, $B$ wins the tie. The voter of type 3 can swap $C$ with $A$ to force a tie, which $A$ will win.

Under $\alpha_j$, $A$ is the winner with $n$ points. Voters of type 2 have no incentive to manipulate. A voters of type 1 would rather see $B$ win, but $B$ has at most $n-1$ points, so a voter of this type can at most force a tie, which $A$ will win. Likewise, the voter of type 3 would rather $C$ win, who has 1 point. Decreasing $A$'s score will at most force a tie.

Case two: $n$ odd, $n=2q+1$.

\[
\begin{array}{|c|c|c|c|}
\hline
     q\text{ voters of type 1} &  B&CD_1\dots D_{j-2}& A D_{j-1}\dots D_{m-3} \\
\hline
     q\text{ voters of type 2} & A&BD_1\dots D_{j-2}  &CD_{j-1}\dots D_{m-3}   \\
     \hline
     1\text{ voter of type 3} & C &BD_1\dots D_{j-2}   &AD_{j-1}\dots D_{m-3}  \\
     \hline
\end{array}
\]

Under $\alpha_i$, $A$ wins by tie-breaking. The voter of type 3 can swap $C$ with $B$ to give $B$ a points victory.

 Under $\alpha_j$, $B$ wins with $n$ points. A voter of type 2 would rather see $A$ win, but $B$ has at least a two point lead, so he cannot force a tie. The voter of type 3 would rather $C$ win, but $C$ is behind by at least one point, so he can at best force a tie, which $B$ will win.

\end{proof}

\begin{corollary}\label{lem:ambig}
For all $n,m \geq 2j$: $\alpha_{j} \nps \alpha_{i}$.
\end{corollary}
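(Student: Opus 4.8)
The plan is to recognise that this corollary is not a fresh argument but a consolidation of the three lemmas just proved; together with \autoref{lemLaij} (which already gives $\alpha_i\nps\alpha_j$) it will feed into \autoref{thm:approval} to yield $\alpha_i\xps\alpha_j$ whenever $m\geq 2j$. So the only work is a case analysis checking that the hypotheses of \autoref{lem:alphabigm}, \autoref{aoddsmall} and \autoref{lem:aione} jointly cover every pair $(n,m)$ with $m\geq 2j$, and that each of those lemmas then delivers exactly the conclusion $\alpha_j\nps\alpha_i$.

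Concretely, I would split first on whether $i=1$. If $i=1$, then \autoref{lem:aione} already asserts $\alpha_j\nps\alpha_i$ for all $n$ and $m$, so in particular for $m\geq 2j$, and there is nothing more to do. If $i\geq 2$, I would split further on the parity of $n$: when $n=2q$ is even, \autoref{lem:alphabigm} applies; when $n=2q+1$ is odd, \autoref{aoddsmall} applies, its extra hypothesis $i\geq 2$ being precisely the branch we are in. Both of those lemmas require only $m\geq 2j-1$, which is implied by the corollary's hypothesis $m\geq 2j$, so in each branch the desired non-domination follows immediately.

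There is essentially no obstacle here: the substantive content lives in the profile constructions of the three lemmas, and the corollary is pure bookkeeping. The only points requiring a moment's care are that the three branches — $i=1$; $i\geq 2$ with $n$ even; $i\geq 2$ with $n$ odd — are genuinely exhaustive (recall $i<j$, so both $i=1$ and $i\geq 2$ can occur), and that the weaker cardinality bound $m\geq 2j-1$ appearing in \autoref{lem:alphabigm} and \autoref{aoddsmall} is subsumed by $m\geq 2j$. Stating the hypothesis as $m\geq 2j$ rather than the sharper $m\geq 2j-1$ costs nothing and keeps it uniform with the rest of the section.
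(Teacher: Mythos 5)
Your proposal is correct and matches the paper's proof, which simply cites \autoref{lem:alphabigm}, \autoref{aoddsmall}, and \autoref{lem:aione}; your explicit case split on $i=1$ versus $i\geq 2$ and on the parity of $n$, together with the observation that $m\geq 2j$ implies $m\geq 2j-1$, is exactly the bookkeeping the paper leaves implicit.
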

\begin{proof} \autoref{lem:alphabigm}, \autoref{aoddsmall}, and \autoref{lem:aione}.
\end{proof}

\begin{lemma}\label{lem:ammiddle}
For all $n, m<2j$: if $i\geq 2$, then $\alpha_{j} \nps \alpha_{i}$.
\end{lemma}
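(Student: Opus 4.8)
The plan is to exhibit, for every $n$ and every $m<2j$, a single profile $P$ at which $\alpha_i$ is manipulable while $\alpha_j$ is not, along the lines of \autoref{lem:alphabigm} and \autoref{aoddsmall}: a candidate $A$ with top tie-breaking priority is made a completely robust winner under $\alpha_j$, whereas under $\alpha_i$ the winner is some other candidate that is ahead only on the tie-break and so can be unseated. What is new is that for $m<2j$ the top $j$ windows of the voters are forced to overlap, so the naive imitation of those lemmas fails -- merging their $B$- and $C$-blocks inflates a secondary candidate's $\alpha_j$-score to the maximum while it is still ranked above $A$ by some voter, which itself opens a manipulation -- and a different device is needed.

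The first step is to isolate a sufficient condition for $\alpha_j$ to be unmanipulable at $P$ that tolerates the overlap: (i) $A$ has top tie-breaking priority; (ii) $A$ lies among the top $j$ entries of every voter; and (iii) $A$ is the \emph{unique} candidate with $\alpha_j$-score $n$. The proof is short, the point being that since $A$ is always inside the window, anything a voter $v$ ranks above $A$ is inside $v$'s window too; hence a deviation that keeps $A$ in the window leaves $A$ at score $n$ and winning (by (i) and (iii)), while a deviation that evicts $A$ drops $A$ only to $n-1$, and the only candidates that can then reach $n$ are those that had score $n-1$ and lay below $v$'s window -- all of which $v$ ranks below $A$ -- so no deviation helps $v$. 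This argument does not use $i\geq 2$; the hypothesis is needed only to make $\alpha_i$ manipulable, where the manipulator must have a spare slot inside a non-trivial top $i$ window (the case $i=1$ being \autoref{lem:aione}).

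The second step is the construction, generalising the following template, given here for $i=2$, $j=3$, $m=4$, $n=3$, with candidates $A,B,C,D$ and $A$ first in the tie-break: voter $1$ ranks $C\succ D\succ A\succ B$, voter $2$ ranks $B\succ D\succ A\succ C$, voter $3$ ranks $B\succ A\succ C\succ D$. Here $A$ is in everyone's top $3$ and is the only candidate with $\alpha_3$-score $3$ -- $B,C,D$ are each pushed out of the top $3$ by exactly one voter, who does so by ranking it last, hence below $A$ -- so the sufficient condition applies and $\alpha_3$ is not manipulable; under $\alpha_2$ the winner is $B$ on a tie-break over $D$ with $A$ one point short, and voter $1$, who prefers $A$ to $B$ and ranks $A$ just below its top $2$ window, swaps $A$ in and makes $A$ win. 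In general one spreads the evictions from the top $j$ across the non-$A$ candidates -- always putting an evicted candidate at the very bottom, so it ends up below $A$ -- evenly enough that every non-$A$ candidate is evicted at least once, which gives (iii); (i) and (ii) are built in; and the top $i$ windows together with the remaining window positions are chosen so that, as in the template, $\alpha_i$'s winner wins only on a tie-break and one voter can promote $A$ into its top $i$ window. Checking the $\alpha_i$-bookkeeping for general $n$ (splitting on the parity of $n$, as in \autoref{lem:alphabigm} and \autoref{aoddsmall}) is routine but lengthy.

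The main obstacle is that the even spread of evictions requires $n(m-j)\geq m-1$ (in particular it is fine once $n\geq j$); when $n<j$ there are not enough eviction slots, so some non-$A$ candidate must appear in every top $j$ window and hence scores $n$, violating (iii). In that regime one must instead ensure that every such unavoidable candidate is ranked below $A$ by every voter -- which, by essentially the same argument, still kills all $\alpha_j$-manipulations -- while keeping the $\alpha_i$ structure intact; this is a real tension, because the candidates forced high under $\alpha_j$ are exactly the ones whose position one can least control, and I expect it to force a case split on the relation between $i$, $j$ and $m$, with the small cases $n=2,3$ verified by hand. (Note that $m=2j-1$ is already settled by \autoref{lem:alphabigm} and \autoref{aoddsmall}, so genuinely new work is needed only for $m\leq 2j-2$.)
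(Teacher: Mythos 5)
Your sufficient condition for $\alpha_j$-unmanipulability is sound and the $i=2,j=3,m=4,n=3$ template checks out, but the proof does not close. The construction needs every non-$A$ candidate to be evicted from some voter's top-$j$ window, i.e.\ $n(m-j)\geq m-1$, and since the lemma lives in the regime $m<2j$ this fails on a large part of its range: for $n=2$ it forces $m\geq 2j-1$, so it covers \emph{none} of the cases this lemma is actually responsible for at $n=2$ (recall $m=2j-1$ is already settled by \autoref{lem:alphabigm}), and similar holes persist for all small $n$. You flag this and propose the relaxed condition that every unavoidable score-$n$ candidate be ranked below $A$ by everyone, but you leave it as ``a real tension \dots\ verified by hand'' --- and the tension is not merely unresolved, it is fatal for some parameters. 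Take $n=2$, $i=2$, $j=3$, $m=4$: each voter has one eviction slot, so any candidate a voter ranks above $A$ must be ranked last by the other voter; one then checks that in every profile satisfying your (relaxed) conditions, $A$ is forced into both voters' top two, wins $\alpha_2$ with the maximum score, and the only candidates anyone prefers to $A$ have $\alpha_2$-score at most $1$, so $\alpha_2$ is not manipulable either. Hence no witness of the required form exists there, and the approach needs to be abandoned, not just completed, in that regime. (A lesser gap: even where the eviction count works, the general-$n$ verification that $\alpha_i$ is manipulable --- winner ahead only on tie-break, $A$ exactly one point short, some voter able to promote $A$ --- is asserted rather than constructed.)

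For contrast, the paper's proof does not try to make the $\alpha_j$-winner a unique top scorer. It splits on $m\geq i+j$ versus $m<i+j$ and uses blocks of interchangeable candidates: under $\alpha_j$ a whole block sits at score $n$ and unmanipulability comes from a counting argument --- there are fewer spare candidates than block members, so evicting the block necessarily promotes a candidate the manipulator likes even less (possibly one with better tie-break priority) --- while under $\alpha_i$ the manipulation is a swap between clones that exploits the tie-break. That mechanism, where dethroning the winner hands victory to someone worse, is exactly what your framework cannot express, and it is what carries the small-$n$ cases.
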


\begin{proof}

Case one: $m\geq i+j$. 

Since $m\geq i+j = 2i + (j-i)$, we can guarantee the existence of $2i$ $A$ candidates and $(j-i)$ $B$ candidates. The remaining $m-(i+j)$ candidates are the $C$ candidates. Observe that since $m < 2j$, the number of the $C$ candidates is smaller than the number of the $B$ candidates $(m-j-i < j-i)$.

\[
\begin{array}{|c|c|c|c|}
\hline
     \floor{n/2}\text{ voters of type 1} & \overbrace{A_{1} \ldots A_{i}}^{\geq2}&B_1\dots B_{j-i}&A_{2i}\dots A_{i+1}\overbrace{C_1\dots C_{m-(i+j)}}^{<j-i} \\
\hline
     \ceil{n/2}\text{ voter of type 2} & A_{i+1} \dots A_{2i}& B_{1} \ldots B_{j-i} & A_{i}\dots A_1 C_1\dots C_{m-(i+j)}\\
     \hline
\end{array}
\]

Under $\alpha_i$, if $n$ is even the winner is $A_1$. A voter of type 2 can manipulate by swapping $A_{2i}$ with $A_{i}$, giving $A_i$ $n/2+1$ points. Since $i\geq 2$, $A_i\neq A_1$. If $n$ is odd, the winner is $A_{i+1}$ with $\ceil{n/2}$ points. A voter of type 1 can swap $A_1$ with $A_{2i}$ to give $A_{2i}$ $\ceil{n/2}+1$ points. Since $i\geq 2$, $A_{2i}\neq A_{i+1}$.

Under $\alpha_j$, all the $B$ candidates have $n$ points, and the winner is $B_1$. A voter of type 1 would rather see one of $A_1,\dots, A_i$ win. Since he cannot raise the score of these, he will have to lower the score of the $B$ candidates. However, there are more $B$ candidates than $C$ candidates, so if the voter were to rank all the $B$ candidates below the $j$th position, he would necessarily raise the score of one of $A_{i+1},\dots,A_{2i}$ to $\ceil{n/2}+1$. That candidate would then win by score, and he is even worse for the manipulator than $B_1$.


Likewise, a voter of type 2 would rather see one of $A_{i+1},\dots, A_{2i}$ win. He can attempt to rank all the $B$ candidates below $j$, but then one of $A_1,\dots,A_i$ will get $\floor{n/2}+1\geq\ceil{n/2}$ points and win the election (possibly by tie-breaking).

 Case two: $m< i+j$.
 
 In the profile below we have $i$ $C$ candidates, $j-i$ $B$ candidates, and $m-j$ $A$ candidates. Since $m<i+j$, $m-j<i$, so the voters of type 1 can rank all the $A$ candidates in the top $i$ positions, as well as at least one $B$ candidate.
  \[
\begin{array}{|c|c|c|c|}
\hline
     n-1\text{ voters of type 1} & \overbrace{B_1\dots B_{i-(m-j)}}^{\geq1}\overbrace{A_{1} \ldots A_{m-j}}^{\geq1}&\text{any order} &\text{any order} \\
\hline
     1\text{ voter of type 2} & C_{1} \dots C_{i}& B_{1} \ldots B_{j-i} & A_{1}\dots A_{m-j}\\
     \hline
\end{array}
\]

Under $\alpha_i$, the winner is $A_1$. The voter of type 2 can manipulate by ranking $B_1$ first.

Under $\alpha_j$, the winner is $B_1$ (either by score, or winning a tie against a $C$ candidate). The voters of type 1 get their best choice elected. The voter of type 2 would rather see a $C$ candidate win, but to do so he would have to lower the score of the $B$ candidates. If he ranks any $B$ candidate below the $j$th position, he would have to rank one of the $A$ candidate above -- that candidate would then win the election with $n$ points, and the outcome would be worse than $B_1$.
 \end{proof}

\begin{corollary}\label{lem:amsmall}
For all $n, m < 2j$: $\alpha_{j} \nps \alpha_{i}$.
\end{corollary}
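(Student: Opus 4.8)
The plan is to observe that this is a pure bookkeeping corollary: the work has already been done in the two preceding lemmas, and all that remains is to check that together they cover every case permitted by the hypothesis $m < 2j$. Since we fixed $i < j$ at the start of the section, $i$ is a positive integer, so either $i = 1$ or $i \geq 2$, and these two possibilities are exhaustive.

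First I would dispatch the case $i \geq 2$. Here the hypotheses of the corollary, namely arbitrary $n$ together with $m < 2j$, are exactly the hypotheses of \autoref{lem:ammiddle}, whose conclusion is precisely $\alpha_j \nps \alpha_i$. So nothing further is needed in this case.

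Next I would handle the case $i = 1$. For this I would invoke \autoref{lem:aione}, which establishes $\alpha_j \nps \alpha_i$ for \emph{all} $n$ and $m$ whenever $i = 1$; in particular it applies when $m < 2j$. Combining the two cases gives the claim for all $n$ and all $m < 2j$.

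I do not expect any obstacle: the only thing to verify is the exhaustiveness of the case split, which is immediate, and that the hypotheses of the cited lemmas are met, which they are. The substantive content — the explicit manipulable profiles and the verification that $\alpha_j$ is not manipulable on them — lives in \autoref{lem:ammiddle} and \autoref{lem:aione}, not here.
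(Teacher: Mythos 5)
Your proposal is correct and matches the paper's proof exactly: the corollary is obtained by splitting on $i=1$ versus $i\geq 2$ and citing \autoref{lem:aione} and \autoref{lem:ammiddle} respectively. Nothing further is needed.
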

\begin{proof}
\autoref{lem:aione} and \autoref{lem:ammiddle}.
\end{proof}

\begin{theorem}\label{thm:approval}
For all $n, m$: $\alpha_{i} \times_{PS} \alpha_{j}$.
\end{theorem}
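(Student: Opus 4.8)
The statement $\alpha_i \xps \alpha_j$ unpacks, by the definition of $\xps$, into the two claims $\alpha_i \nps \alpha_j$ and $\alpha_j \nps \alpha_i$, each required to hold for every choice of $n$ and $m$. So the plan is simply to assemble the one-directional incomparabilities established above into a single statement. The direction $\alpha_i \nps \alpha_j$ is precisely \autoref{lemLaij}, which is already stated for all $n,m$, so nothing further is needed there.

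For the reverse direction $\alpha_j \nps \alpha_i$ I would case on the size of $m$ relative to $2j$, since whether the top-$j$ blocks of the different voter types can be made essentially disjoint is what determines which profile construction is available. When $m \geq 2j$, combine \autoref{lem:alphabigm} (for $n$ even), \autoref{aoddsmall} (for $n$ odd with $i \geq 2$), and \autoref{lem:aione} (for $i = 1$) — this is exactly \autoref{lem:ambig}. When $m < 2j$, combine \autoref{lem:aione} (for $i = 1$) with \autoref{lem:ammiddle} (for $i \geq 2$) — this is \autoref{lem:amsmall}. These two corollaries between them cover every $n,m$, giving $\alpha_j \nps \alpha_i$ unconditionally. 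Putting the two directions together yields $\alpha_i \xps \alpha_j$ for all $n,m$.

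The theorem itself is thus a routine conjunction; the real content lies in the lemmas it quotes. There the main obstacle is the small-$m$ regime, where one must design profiles in which $\alpha_j$ ties every candidate of a designated block (so that no voter can profitably perturb the winner) while $\alpha_i$ admits a manipulation, all without enough candidates to keep the relevant top-$j$ segments disjoint; and, because the standard "swap two candidates within a block" manipulation produces a genuinely new winner only when the block has at least two members, the case $i=1$ has to be handled by separate constructions.
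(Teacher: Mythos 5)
Your proposal is correct and matches the paper's proof exactly: \autoref{lemLaij} gives $\alpha_i\nps\alpha_j$ for all $n,m$, and \autoref{lem:ambig} together with \autoref{lem:amsmall} covers all $n,m$ for the direction $\alpha_j\nps\alpha_i$. Your further unpacking of the two corollaries into their constituent lemmas is also accurate, so nothing needs to change.
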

\begin{proof}
By \autoref{lemLaij}, $\alpha_i\nps\alpha_j$. By \autoref{lem:ambig} and \autoref{lem:amsmall}, $\alpha_j\nps\alpha_i$.
\end{proof}

\section{The $k$-Borda Family}

As before, we fix $i<j$. In this section we will show that for $n=2,j\neq m-1$, $\beta_j\pss\beta_i$ (\autoref{cor:main}), but in all other cases the rules are incomparable (\autoref{thm:bordainc}, \autoref{cor:betaibetaj}, \autoref{prop:betakborda}).

We make use of a standard result about the manipulability of scoring rules:

\begin{lemma}\label{lem:normalform}
Consider a profile $P$, and scoring rule $f$. Let $w$ be the winner under sincere voting, $f(P)=w$. Call all the candidates voter $i$ perceives to be at least as bad as $w$ (including $w$) the bad candidates. The others, the good candidates. Order the good candidates $g_1,\dots, g_q$ and the bad candidates $b_1,\dots,b_r$ from the highest to the lowest scoring in $P_{-i}$. In case of equal scores, order candidates by their order in the tie-breaking. We claim that if $i$ can manipulate $f$ at $P$, he can manipulate with the following vote:
$$P_i^*=g_1\succ\dots\succ g_q\succ b_r\succ\dots\succ b_1.$$
\end{lemma}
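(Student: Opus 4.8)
The plan is to show that if voter $i$ has any successful manipulation at $P$, then the specific vote $P_i^*$ described in the statement is also successful. The key observation is that the outcome of a scoring rule under the vote $P_i'$ depends only on the point contributions $s_1,\dots,s_m$ that $P_i'$ assigns to each candidate, added on top of the fixed scores coming from $P_{-i}$. So the whole problem reduces to: which assignments of the score vector to the candidates yield a winner that voter $i$ strictly prefers to $w = f(P)$?

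\textbf{Step 1: Reduce to electing a good candidate.} First I would observe that a manipulation is successful precisely when the new winner is a good candidate (one that $i$ strictly prefers to $w$); electing any bad candidate is by definition no improvement over $w$. So we may assume $i$'s original manipulating vote $P_i'$ elects some good candidate $g$.

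\textbf{Step 2: Show $P_i^*$ maximizes the good candidates' chances.} The vote $P_i^*$ assigns the top $q$ score values $s_1 \geq \dots \geq s_q$ to $g_1,\dots,g_q$ in order of their $P_{-i}$-scores (ties broken by the tie-breaking order), and the remaining values to the bad candidates, with the \emph{lowest} values going to the highest-scoring bad candidates. I would argue two things. First, under $P_i^*$, among the good candidates the one with the highest total score (ties broken correctly) is exactly the one that was highest in $P_{-i}$ — call it $g_1$ — because a rearrangement argument (a standard exchange/majorization argument) shows that giving the larger score increments to the already-higher-scoring good candidates cannot lower the maximum good-candidate total, and in fact makes $g_1$ the good winner. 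Second, $P_i^*$ simultaneously \emph{minimizes} the scores of all bad candidates relative to what any vote could do, since every bad candidate receives one of the smallest $r$ score values, and moreover the largest of those small values goes to the lowest-scoring bad candidate. Hence if \emph{some} vote $P_i'$ makes a good candidate beat every bad candidate, then $P_i^*$ does so as well: the best good candidate under $P_i^*$ has a total at least as large as the best good candidate has under $P_i'$ (wait — this needs care, since $P_i'$ might concentrate score on one particular good candidate), so let me phrase the real argument below.

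\textbf{Step 3: The crux — handling a single target candidate.} The subtle point, and the main obstacle, is that $P_i'$ need not elect $g_1$; it might elect some other good candidate $g_t$ by piling the top score values onto $g_t$ specifically. So the comparison is not "max good score under $P_i^*$ vs.\ max good score under $P_i'$" directly. The right argument: let $g_t$ be the winner under $P_i'$. I claim that under $P_i^*$, the good candidate $g_1$ (or whichever good candidate actually wins under $P_i^*$) still beats all bad candidates. To see this, compare bad-candidate scores: under $P_i^*$ every bad candidate gets a smaller increment than under any vote that does not put a top-$q$ value on it — but $P_i'$ could in principle give a bad candidate $b$ a small increment too. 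Here is where being careful pays off: since $P_i'$ elects the good candidate $g_t$, we know $g_t$'s total under $P_i'$ strictly exceeds (or ties-and-wins) every bad candidate's total under $P_i'$. Now observe that under $P_i^*$, the \emph{sum} of increments given to good candidates equals $s_1+\dots+s_q$, the $q$ largest values, which is at least the sum of increments $P_i'$ gives to the good candidates (since $P_i'$ gives them \emph{some} $q$-subset of the values, not necessarily the top $q$). Combined with the fact that $P_i^*$ orders these increments to match the $P_{-i}$-scores, a pigeonhole/averaging argument shows the top good total under $P_i^*$ is at least the top good total under $P_i'$, hence at least $g_t$'s winning total; and the bad totals under $P_i^*$ are each at most the corresponding bad totals under $P_i'$ after the optimal matching — so the good winner under $P_i^*$ still beats every bad candidate. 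I would then carefully check the tie-breaking edge cases, which is where the "order ties by the tie-breaking order" clause in the statement of $P_i^*$ does its work: it ensures that whenever scores are equal, the candidate who wins the tie is the one we want, so no manipulation is lost to an unfavourable tie-break.

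\textbf{Step 4: Conclude.} Putting the pieces together, $P_i^*$ elects a good candidate whenever any vote does, so if $i$ can manipulate at all, $P_i^*$ is a witnessing manipulation. The main obstacle, as flagged, is Step 3 — ruling out the scenario where a cleverly concentrated vote $P_i'$ beats the "balanced" canonical vote $P_i^*$; I expect this to require a clean exchange argument on score vectors together with a careful treatment of ties, rather than any deep new idea.
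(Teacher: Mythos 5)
Your overall strategy coincides with the paper's: reduce to electing a good candidate, show the best good candidate under $P_i^*$ scores at least as well as the good candidate $g_t$ elected by the hypothesised manipulation $P_i'$, show the bad candidates fare no better under $P_i^*$ than under $P_i'$, then handle ties. However, the two inequalities at the heart of your Step 3 are not established by the arguments you give. For the good candidates, the sum-of-increments/averaging argument cannot deliver a statement about a \emph{maximum}: knowing that the good candidates' totals sum to more under $P_i^*$ than under $P_i'$ tells you nothing about whether the top good total under $P_i^*$ reaches $g_t$'s winning total under $P_i'$. The correct (and much simpler) argument is direct: $g_1$ has the largest $P_{-i}$-score among good candidates and receives the largest increment $s_1$, so $\score(g_1,P_{-i})+s_1 \geq \score(g_t,P_{-i})+s_k$ for whatever position $k$ the vote $P_i'$ assigns to $g_t$; you gesture at this ("orders these increments to match the $P_{-i}$-scores") but frame it as an averaging step, which does not work. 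For the bad candidates, the claim that "the bad totals under $P_i^*$ are each at most the corresponding bad totals under $P_i'$ after the optimal matching" is both unproved and stronger than necessary; what is needed is only that the \emph{maximum} bad total under $P_i^*$ is at most the maximum bad total under $P_i'$, and the paper gets this by pigeonhole: if $b_p$ attains the maximum under $P_i^*$ with increment $s_{m-p+1}$ and every bad candidate did strictly worse under $P_i'$, then all $p$ of the candidates $b_1,\dots,b_p$ (whose $P_{-i}$-scores are at least $b_p$'s) would each have to receive strictly less than $s_{m-p+1}$ from $P_i'$, yet only $p-1$ positions award that little.

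The tie-breaking cases, which you defer with "I would then carefully check," are also not a formality — they occupy roughly half of the paper's proof. When the score inequalities all hold with equality one must show (i) that $g_1$ beats $g_t$ in the tie-break, which does follow from the convention of ordering equal-scored good candidates by tie-breaking priority, and (ii) that no bad candidate tied with $g_1$ under $P_i^*$ wins the tie. The paper's argument for (ii) is a second counting argument: if such a $b_p$ won the tie under $P_i^*$, then $b_p$ must have been ranked lower in $P_i'$ than in $P_i^*$, so some bad candidate $b_t$ ranked above $b_p$ in $P_i'$ but below it in $P_i^*$ would have scored at least as much as $g_t$ in $(P_i',P_{-i})$ and won there, contradicting the assumption that $P_i'$ was a successful manipulation. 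Without (ii) the proof is incomplete. In short: the skeleton is right and matches the paper, but the load-bearing steps need to be replaced by the direct comparison, the pigeonhole bound, and an explicit tie-breaking analysis.
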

\begin{proof}
Let $\score(c,P)$ be the score of candidate $c$ at profile $P$. Suppose voter $i$ can manipulate at $P$. That is, there is a $P_i'$ such that $f(P_i',P_{-i})=g_j$. In order to be the winner, $g_j$ must have the highest score.
\begin{equation}\label{eqngiwins}
    \score (g_j, P_{i}', P_{-i}) \geq \max_{c\neq g_j}(\score ( c, P_{i}', P_{-i})).
\end{equation}
Observe that $\score (g_j, P_{i}', P_{-i})=\score (g_j, P_{-i})+s_k$, where $k$ is the position in which $g_j$ is ranked in $P_i'$. By ranking $g_1$ first in $P_i^*$ it follows that $\score (g_1, P_{i}^{*}, P_{-i})=\score (g_1, P_{-i})+s_1$, and observe that $\score (g_1, P_{-i})\geq \score (g_j, P_{-i})$, and $s_1\geq s_k$. Thus:
\begin{equation}\label{eqng1hasmore}
    \score ( g_1, P_{i}^{*}, P_{-i})\geq \score (g_j, P_{i}', P_{-i}).
\end{equation}

We now claim that the score of the highest scoring bad candidate in $(P_i',P_{-i})$ is no higher than in $(P_i^*,P_{-i})$. For contradiction, suppose that $b_p$ is the highest scoring bad candidate in $(P_i^*,P_{-i})$, and his score is higher than any bad candidate in $(P_i',P_{-i})$. Observe that $\score ( b_p, P_{i}^{*}, P_{-i}) = \score ( b_p, P_{-i}) + s_{m-p+1}$. Since $\score ( b_1, P_{-i}),\dots, \score ( b_p, P_{-i})\geq \score ( b_p, P_{-i})$, this means that bad candidates $b_1,\dots,b_p$ must all get strictly less than $s_{m-p+1}$ points in $P_i'$. However there are $p$ such candidates, and only $p-1$ positions below $m-p+1$.

Since the highest scoring candidate in $(P_i',P_{-i})$ has at least as many points as the highest scoring bad candidate, it follows that:
\begin{equation}\label{eqnbjhasless}
    \max_{c\neq g_j}(\score ( c, P_{i}', P_{-i}))\geq \max_{b\in\set{b_1,\dots,b_r}}(\score ( b, P_{i}^{*}, P_{-i})).
\end{equation}
Combining \ref{eqngiwins}, \ref{eqng1hasmore}, and \ref{eqnbjhasless}, we conclude that $g_1$ is among the highest scoring candidates in $(P_i^*,P_{-i})$. If at least one of the inequalities is strict, $g_1$ has more points than any bad candidate and we are done.

Suppose then that all the inequalities are equal. Observe that this implies that if $g_1\neq g_j$, then $g_1$ must come before $g_j$ in the tie-breaking order. To see this, observe that if we assume $\score ( g_1, P_{i}^{*}, P_{-i})= \score (g_j, P_{i}', P_{-i})$, then it follows that $\score ( g_1,  P_{-i}) + s_1= \score (g_j,  P_{-i}) + s_k$, where $k$ is the position in which $g_j$ is ranked in $P_i'$. Since $\score (g_1, P_{-i})\geq \score (g_j, P_{-i})$, and $s_1\geq s_k$, the only way this is possible is if $\score (g_1, P_{-i})= \score (g_j, P_{-i})$. By definition, in the case of equal scores in $P_{-i}$, the candidate that is labelled $g_1$ must have priority in the tie-breaking.

If $g_1$ also wins the tie against any bad candidate, we are done. For contradiction, suppose a bad candidate $b_p$ wins the tie given $P_i^*$. This means that $b_p$ beats $g_1$ and $g_j$ in the tie-breaking. Observe that $b_p$ is ranked in position ${m-p+1}$ in $P_i^*$. Since $b_p$ loses in $(P_i',P_{-i})$, $b_p$ must have been ranked lower than ${m-p+1}$ in $P_i'$. This means $b_p$ was ranked lower than at least $m-p+1$ candidates. Since $m=q+r$, and there are $q$ candidates, this means $b_p$ was ranked lower than at least $r-p+1$ bad candidates. In $P_i^*$, $b_p$ is ranked below exactly $r-p$ bad candidates, so there must exist a bad candidate that was ranked above $b_p$ in $P_i'$, but is ranked below $b_p$ in $P_i^*$. Call this candidate $b_t$. By definition of $P_i^*$, it must be the case that $\score(b_t,P_{-i}) >\score(b_p,P_{-i})$ or $\score(b_t,P_{-i}) =\score(b_p,P_{-i})$ and $b_t$ wins the tie. But that is impossible, because then $b_t$ would have gained at least as many points in $(P_i',P_{-i})$ as $b_p$ did in $(P_i*,P_{-i})$, and since the score of $b_p$ in $(P_i*,P_{-i})$ is equal to $g_1$, it means $b_t$ has at least as many points in $(P_i',P_{-i})$ as $g_j$, so wins either by points or by tie-breaking.
\end{proof}

\begin{lemma}\label{prop:2qij}
For $n=2q$, all $m$: $\beta_i\nps\beta_{j}$.
\end{lemma}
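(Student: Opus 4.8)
The goal is to exhibit, for every $m$ and every even $n=2q$, a profile $P$ that is manipulable under $\beta_j$ but not under $\beta_i$. Since we already have a general pattern from the $k$-approval section — split the voters into two (or three) symmetric ``types'' so that under the smaller rule the outcome is a consensus top choice (hence strategy-proof at that profile) while under the larger rule the extra scored positions create a tie that one voter can tip — the natural move is to build an analogous profile for truncated Borda. The plan is: first design a profile in which, under $\beta_i$, a single candidate (say $B_1$) is the unanimous favourite of enough voters to be the strict winner, so \emph{no} voter can do better and $\beta_i$ is not manipulable at $P$; second, arrange the candidates ranked in positions $i+1,\dots,j$ so that under $\beta_j$ several candidates become tied (or nearly tied) at the top, with the lexicographic winner being someone a particular voter dislikes; third, verify that this voter can swap two candidates among positions $i+1,\dots,j$ (positions that are scored by $\beta_j$ but not $\beta_i$) to change the outcome in their favour.

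Concretely, I would use a two-type profile with $q$ voters of type 1 and $q$ voters of type 2 (using $n=2q$), choosing the candidate sets so that the top $i$ positions of the two types overlap in a controlled way. The key quantitative step is the Borda bookkeeping: under $\beta_i$ the scores in the top-$i$ block must give one candidate a strict lead, while under $\beta_j$ the contributions from positions $i+1,\dots,j$ (which award $j-i$ down to $1$ extra points) must equalise the scores of a cluster of candidates. One clean way to force the $\beta_j$-tie is to have each type rank the same auxiliary candidates in positions $i+1$ through $j$ but in \emph{reversed} order, so that each such candidate collects the same total $q\cdot((j-i)+1) = q(j-i+1)$ from those positions across the two types, mirroring the reversal trick used in \autoref{lemLaij}. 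Then a type-2 voter, seeing the $\beta_j$-winner is not their favourite, permutes the candidates in positions $i+1,\dots,j$ to break the tie toward a better candidate; one must check this permutation is available (i.e. doesn't disturb the top-$i$ block) and that under $\beta_i$ the same manipulation does nothing because those positions are unscored.

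The main obstacle I anticipate is the same one the $k$-approval proofs had to fight: making a \emph{single} construction work uniformly across all $m$, including the tight regime where $m$ is small (say $m = j+1$ or $m=i+1$) and there simply aren't enough ``spare'' candidates to pad positions $i+1,\dots,j$ and to fill in the tail. It is quite possible the clean reversal construction only works for $m$ large enough, and the lemma's proof will need a separate small-$m$ case — exactly the pattern of \autoref{lem:alphabigm} versus \autoref{lem:ammiddle}. A secondary subtlety is tie-breaking: since ties under $\beta_j$ are broken lexicographically, I need to be careful that the candidate labelled $B_1$ genuinely wins under $\beta_i$ (so no profitable deviation exists) and that the $\beta_j$-tie is broken \emph{against} the would-be manipulator, so that the manipulation is strictly improving; this is a matter of assigning the alphabetical labels after the combinatorial skeleton is fixed, just as in the preceding lemmas. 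I would also keep \autoref{lem:normalform} in reserve to streamline the verification that $\beta_i$ admits no manipulation, by only having to check the single normal-form deviation rather than all of $\CL(\CC)$.
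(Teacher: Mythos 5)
There is a genuine gap here: your central mechanism does not transfer from $k$-approval to truncated Borda. You propose to make the $\beta_i$-winner a (near-)unanimous top choice and then create a tie under $\beta_j$ among candidates sitting in positions $i+1,\dots,j$. Under $k$-approval this works because a candidate in the top $i$ positions gets the \emph{same} one point under $\alpha_i$ and $\alpha_j$, so the newly scored positions can catch up. Under Borda, passing from $\beta_i$ to $\beta_j$ adds $j-i$ points \emph{per voter} to every candidate in the top-$i$ block, while a candidate confined to positions $i+1,\dots,j$ gains strictly less than $j-i$ from each such voter. So if $B_1$ is ranked first by all $2q$ voters it scores $2qj$ under $\beta_j$ and still wins outright (and the profile is trivially non-manipulable under both rules); if $B_1$ is only the favourite of a majority block, the reversal trick in positions $i+1,\dots,j$ cannot close the gap without simultaneously making the minority voters able to manipulate $\beta_i$, and your symmetric $q$/$q$ two-type design gives you no asymmetry to prevent that. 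Your plan also conflates ``the proposed manipulation does nothing under $\beta_i$'' with ``$\beta_i$ is not manipulable at $P$''; the latter is the statement you must prove, and it is the hard part.

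The paper's construction exploits a genuinely Borda-specific lever that your plan misses: the value of the \emph{second-place} score. It uses one $A$, one $B$, and filler candidates, with four voter types (sizes $1,1,q-1,q-1$), arranged so that $A$ leads $B$ by $i-1$ points under $\beta_i$ and by $j-2$ points under $\beta_j$, while the pivotal voter (who ranks $A$ second) can deprive $A$ of exactly $s_2=i-1$ points under $\beta_i$ (forcing a tie that $A$ wins lexicographically) but $s_2=j-1$ points under $\beta_j$ (pushing $B$ one point ahead). The whole proof rests on the gap growing by $j-i-1$ while the manipulator's destructive power grows by $j-i$. Nothing in your outline identifies this one-point discrepancy or any substitute for it, and the verification that the remaining voter types cannot manipulate $\beta_i$ (via \autoref{lem:normalform}) is itself nontrivial. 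As written, the proposal would not yield a proof without being redesigned around a different quantitative mechanism.
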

\begin{proof}
Consider a profile with one $A$ candidate, one $B$ candidate, and $m-2$ $C$ candidates:

\[
\begin{array}{|c|c|c|c|}
\hline
     1\text{ voter of type 1} & A  C_{1} \ldots C_{i-1}&  C_{i} \dots C_{j-2} B&   C_{j-1}\dots C_{m-2}\\
\hline
    1\text{ voter of type 2} & B A C_{m-2} \ldots C_{m-i+1}& C_{m-i}\dots C_{m-j+1} &  C_{m-j}\dots C_1\\
     \hline
     q-1\text{ voters of type 3} & A  C_{1} \ldots C_{i-1} & C_i\dots C_{j-1} &   C_{j-2}\dots C_{m-2}B\\
\hline
    q-1\text{ voter of type 4} & B  C_{m-2} \ldots C_{m-i}& C_{m-i-1}\dots C_{m-j} &  C_{m-j-1}\dots C_1A\\
     \hline
\end{array}
\]
Under $\beta_{j}$, $A$ has $qj+(j-1)$ points. $B$ has $qj+1$. $C_1$ is the highest scoring $C$ candidate with $q(j-1)$. The winner is $A$. However, the voter of type 2 can rank $A$ last and shift the $C$ candidates up one. This gives $A$ a score of $qj$, $B$'s score is still $qj+1$, and a $C$ candidate's is at most $q(j-1)+1$. $B$ wins a points victory.

 Under $\beta_i$, $A$ has $qi +(i-1)=qi+i-1$ points. $B$ has $qi$. $C_1$ has $q(i-1)$, the other $C$ candidates no more. Voters of type 1 and three have no incentive to manipulate. The voter of type 2 would rather see $B$ win, but by \autoref{lem:normalform} this would mean ranking $A$ last, and $A$ would still have $qi$ points and win the tie. A voter of type 4 would rather see anyone win, and by \autoref{lem:normalform} this involves putting either $B$ or $C_1$ first. $B$ is already ranked first and does not win, and putting $C_1$ first would give $C_1$ $q(i-1)+i=qi-q+i$ points, which is less than $A$.
\end{proof}

\begin{lemma}\label{prop:oddij}
For $n=2q+1$, all $m$: $\beta_i\nps\beta_{j}$.
\end{lemma}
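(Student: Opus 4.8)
The plan is to mimic the construction in \autoref{prop:2qij} for the even case, adjusting it so that the total number of voters is odd. The target is to exhibit, for each $m$, a profile on which $\beta_i$ is not manipulable but $\beta_j$ is — that is, a witness to $\beta_i \nps \beta_j$. As in the even case, I would use a small fixed set of ``named'' candidates $A$ and $B$ whose fate we control carefully, together with a pool of $C$ candidates $C_1,\dots,C_{m-2}$ that serve only to pad out the orders and to be shuffled up by one position when a manipulator drops $A$. The scores are again designed so that under $\beta_j$ candidate $A$ wins by a thin margin over $B$ (a margin that a single voter can erase by pushing $A$ to the bottom and sliding the $C$'s up), whereas under $\beta_i$ candidate $A$'s lead over $B$ and over every $C$ candidate is large enough that no admissible deviation — and by \autoref{lem:normalform} it suffices to check the single canonical deviation for each voter — can dethrone $A$.

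Concretely I would take $q$ voters of a ``type~1/3'' flavour ranking $A$ high and $B$ low, and $q+1$ voters of a ``type~2/4'' flavour ranking $B$ high and $A$ low, with the $C$ candidates inserted so that (i) the $C$'s never accumulate enough points to threaten $A$ under either rule, and (ii) exactly one of the $B$-favouring voters has $A$ in second place (just below $B$) with the $C$'s arranged beneath in the order that, when $A$ is dropped to last, shifts a $C$ up by one position only. One then computes: under $\beta_j$, $\score(A) = qj + (\text{small positive})$ and $\score(B) = q j + (\text{slightly smaller positive})$, with every $C$ strictly below both; the designated type~2 voter drops $A$ to last, losing $A$ exactly enough points to fall behind $B$, while $B$'s score is unchanged and no $C$ overtakes $B$, so $B$ wins — a successful manipulation. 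Under $\beta_i$, the same bookkeeping gives $\score(A) = qi + (\text{positive})$ with a lead of more than $s_1 = i$ over $B$ and over the top $C$, so by \autoref{lem:normalform} the only deviation worth checking for an $A$-disliking voter is to put $B$ (or the top $C$) first and $A$ last, and the arithmetic shows $A$ still ties or beats the result and wins the tie; the $A$-favouring voters have no incentive to move at all.

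The main obstacle is purely combinatorial parity bookkeeping: making the margins come out with the ``$+1$'' slack in the $\beta_j$ computation but a margin exceeding $i$ in the $\beta_i$ computation, simultaneously for \emph{all} $m \geq 2$ (including the tight corner cases where there are very few $C$ candidates, e.g. $m=2$ or $m=3$, and the ``shift the $C$'s up by one'' move degenerates). I expect to handle the small-$m$ corner cases either by a direct ad hoc profile or by checking that the generic construction still parses when some of the $C$-blocks are empty. Once the profile is fixed, verifying non-manipulability of $\beta_i$ is mechanical thanks to \autoref{lem:normalform}: it reduces the infinitely many possible deviations of each voter to one canonical vote, and for each voter type one writes down the resulting scores of $A$, $B$, and the best $C$ and observes $A$ still wins. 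Verifying that $\beta_j$ \emph{is} manipulable is a one-line score comparison for the single designated voter. Thus the write-up will consist of the profile table, three or four lines of score arithmetic for $\beta_j$, and a short case analysis (one case per voter type, using \autoref{lem:normalform}) for $\beta_i$.
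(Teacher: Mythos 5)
Your high-level plan is the paper's: exhibit one witness profile with two named candidates $A,B$ plus padding candidates $C_1,\dots,C_{m-2}$, give a single-voter deviation that succeeds under $\beta_j$, and use \autoref{lem:normalform} to reduce the non-manipulability check for $\beta_i$ to one canonical vote per voter type. But the concrete profile you describe does not have the properties you assert. With $q$ voters ranking $A$ first and $B$ at the bottom, and $q+1$ voters ranking $B$ first with exactly one of them placing $A$ second, the scores under $\beta_k$ are $\score(A)=qk+(k-1)$ and $\score(B)\geq(q+1)k$, so $B$ beats $A$ by at least one point under \emph{both} rules. Hence $A$ is not the sincere winner, your designated manipulator (a $B$-supporter) has no incentive to deviate, and the $A$-supporters already rank $B$ at the bottom and so have no leverage over $B$'s score at all: the profile is manipulable under neither rule and witnesses nothing. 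This is not a detail that "parity bookkeeping" will absorb -- with an odd electorate the majority favourite must win, so the template of \autoref{prop:2qij} cannot be transplanted with the roles of $A$ and $B$ intact.

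Separately, your sketch never supplies the mechanism that makes the winner's margin erasable under $\beta_j$ yet safe under $\beta_i$. Dropping a second-ranked leader to the bottom removes exactly $k-1$ points under $\beta_k$, so you need the margin to be at most $j-1$ when scored by $\beta_j$ but at least $i$ when scored by $\beta_i$; if $A$ and $B$ only ever appear near the top or at the very bottom of ballots, the margin is a fixed affine function of $k$ with integer slope, and no such function meets both constraints unless you exploit the positions strictly between $i$ and $j$. The paper's device is exactly that: $B$ (first for $q+1$ voters, second for the remaining $q$) wins sincerely with margin $k$ over $A$, and $A$ is parked at position exactly $j$ in one ballot, worth one point under $\beta_j$ and zero under $\beta_i$, so the margin is $j-1$ versus $i$; a type-3 voter ($A$ first, $B$ second) then forces a tie under $\beta_j$ that $A$ wins lexicographically, while the same move leaves $B$ one point ahead under $\beta_i$. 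Some such position-sensitive placement, together with the reversal of who wins sincerely and who benefits from the manipulation, is the actual content of the lemma and is missing from your proposal.
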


\begin{proof}
Consider a profile with one $A$ candidate, one $B$ candidate, and $m-2$ $C$ candidates:
\[
\begin{array}{|c|c|c|c|}
\hline
    1\text{ voter of type 1} & BC_1\dots C_{i-1}&C_{i}\dots C_{j-2}A&C_{j-1}\dots  C_{m-2}\\
     \hline
     q\text{ voters of type 2} &BAC_1\dots C_{i-2}&C_{i-1}\dots C_{j-2}&C_{j-1}\dots C_{m-2}\\
     \hline
q\text{ voters of type 3} &AB C_{m-2}\dots C_{m-i+1}& C_{m-i}\dots C_{m-j+1}&C_{m-j}\dots C_1\\
     \hline
\end{array}
\]
Under $\beta_i$, $A$ has $qi+q(i-1)=2qi-q$ points. $B$ has $(q+1)i+q(i-1)=2qi-q+i$. All the $C$ candidates are Pareto dominated by $B$, so the winner is $B$. A voter of type 3 would like to see $A$ win, but if he ranks $B$ last, $A$ will have $2qi-q$ points to $B$'s $2qi-q+i-(i-1)=2qi-q+1$, so $B$ would still win.

Under $\beta_j$, $A$ has $qj+q(j-1)+1=2qj-q+1$, $B$ has $(q+1)j+q(j-1)=2qj-q+j$. If a voter of type three ranks $B$ last and shifts the $C$ candidates up one, $B$ will have $2qj-q+j-(j-1)=2qj-q+1$, tying with $A$, and $A$ wins the tie. It remains to check that $A$ will have more points than the highest scoring $C$ candidate, which is clearly $C_1$. After the manipulation, $C_1$'s score will increase by at most one. $C_1$'s score before manipulation is $j-1+q(j-2)$, so $A$ will beat $C_1$ if:
\begin{align*}
    2qj-q+1&\geq j  + qj-2q,\\
    qj-q&\geq j -1 -2q,\\
    qj-j&\geq -1 -q,\\
    (q-1)j&\geq -1 -q.
\end{align*}
Which is always satisfied.
\end{proof}

\begin{corollary}\label{cor:betaibetaj}
For all $n, m$: $\beta_i\nps\beta_{j}$.
\end{corollary}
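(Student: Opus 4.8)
The plan is straightforward: \autoref{cor:betaibetaj} asserts that $\beta_i \nps \beta_j$ for all $n$ and $m$, and this is nothing more than a case split on the parity of $n$. Since \autoref{prop:2qij} handles the case $n = 2q$ and \autoref{prop:oddij} handles the case $n = 2q+1$, and every natural number $n$ falls into exactly one of these two cases, the corollary follows immediately by combining the two lemmas. So the proof is simply a one-line appeal: by \autoref{prop:2qij} and \autoref{prop:oddij}.

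There is, however, one small gap to be careful about. The lemmas as stated implicitly assume there are ``enough'' candidates for the displayed profile to make sense — e.g. \autoref{prop:2qij} uses one $A$, one $B$, and $m-2$ $C$ candidates, which requires $m \geq 2$, and the profile's internal structure (the runs $C_i \ldots C_{j-2}$, $C_{m-i} \ldots C_{m-j+1}$, etc.) needs $m$ to be large enough relative to $j$ that the various blocks of $C$ candidates do not overlap in a way that breaks the score computations. If the intended reading is that $\beta_i$ and $\beta_j$ are only genuinely distinct, nontrivial rules when $m > j$ (so that $s_j > 0$ and the $j$-th position actually matters), then the relevant range of $m$ is automatically covered by the two lemmas; the degenerate small-$m$ cases either coincide with an already-settled rule or are vacuous. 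I would either state this assumption explicitly or simply note that the lemmas' profile constructions are valid for all $m$ in the relevant range and cite them as written.

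Since I am permitted to assume everything proved earlier in the excerpt, and the two parity lemmas are exactly what is needed, I do not anticipate any real obstacle here — the corollary is bookkeeping. The only thing I would double-check before writing ``\autoref{prop:2qij} and \autoref{prop:oddij}'' is that neither lemma secretly assumed $n \geq 2$ in a way that excludes $n = 1$: \autoref{prop:oddij} with $q = 0$ gives $n = 1$, and one should confirm the profile (just the single voter of type 1, with types 2 and 3 empty) and the inequality $(q-1)j \geq -1-q$ still go through at $q=0$, namely $-j \geq -1$, which fails for $j \geq 2$. This suggests the $n=1$ case may need separate attention or is excluded by convention (a single-voter rule is a dictatorship and trivially strategy-proof, hence $\beta_i \nps \beta_j$ holds vacuously since $\beta_j$ is never manipulable). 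I would add a sentence observing that for $n = 1$ no scoring rule is manipulable, so the claim holds trivially, and otherwise invoke the two lemmas.

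\begin{proof}
If $n = 1$ then no voter has any other voter's ballot to exploit, so $\beta_j$ is not manipulable at any profile and $\beta_i \nps \beta_j$ holds vacuously. For $n \geq 2$, either $n = 2q$ with $q \geq 1$, in which case \autoref{prop:2qij} gives a profile manipulable under $\beta_i$ but not $\beta_j$, or $n = 2q+1$ with $q \geq 1$, in which case \autoref{prop:oddij} does the same. In every case $\beta_i \nps \beta_j$.
\end{proof}
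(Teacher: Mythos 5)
Your main argument is exactly the paper's: the corollary is just the union of the two parity lemmas, \autoref{prop:2qij} for $n=2q$ and \autoref{prop:oddij} for $n=2q+1$, and for $n\geq 2$ your citation is correct and complete.

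However, your handling of the $n=1$ edge case inverts the logic of $\nps$. Recall that $\beta_i\ps\beta_j$ means: for every profile $P$, \emph{if $\beta_j$ is manipulable at $P$ then so is $\beta_i$}. If $n=1$ then, as you say, no scoring rule is ever manipulable --- but that makes the implication vacuously \emph{true} at every profile, so $\beta_i\ps\beta_j$ \emph{holds}, and hence $\beta_i\nps\beta_j$ is \emph{false}, not vacuously true. To establish $\beta_i\nps\beta_j$ one must exhibit a witness profile at which $\beta_j$ is manipulable and $\beta_i$ is not; when nothing is ever manipulable no such witness exists. So the $n=1$ case is not rescued by vacuity --- it is a genuine (degenerate) exception, and the honest fix is to note that the statement implicitly presupposes $n\geq 2$ (as the paper's constructions do throughout), rather than to claim the corollary holds there. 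Your worry about small $m$ is of the same character and is likewise resolved by reading the lemmas on their intended range $m>j$, where both constructions are well formed.
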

\begin{proof}
\autoref{prop:2qij} and \autoref{prop:oddij}.
\end{proof}

\begin{lemma}\label{prop:betajibordaodd}
For $n=2q+1$, all $m$: $\beta_{j}\nps \beta_i$.
\end{lemma}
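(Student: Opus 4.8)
The plan is to mirror the structure of the even-voter argument (Lemma~\ref{prop:2qij}) but adapted to an odd number of voters, constructing a single profile on which $\beta_i$ is not manipulable while $\beta_j$ is. The profile will again use one $A$ candidate, one $B$ candidate, and $m-2$ filler $C$ candidates, partitioned into voter types whose sincere $\beta_i$-tallies leave $A$ as a clear winner with a lead that cannot be overcome by moving $A$ down and shifting $C$'s up. Since $n = 2q+1$ is odd, the natural split is $q$ voters of one type, $q$ of another, and a single ``swing'' voter whose preference order is chosen so that, under $\beta_j$, the scores of $A$ and the strongest $C$ candidate are close enough that the swing voter can push some candidate he prefers past $A$, while under $\beta_i$ the same move fails because the lower scoring positions $s_{i+1},\dots$ are all zero and so shifting $C$'s up buys nothing.

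The key steps, in order: (1) Write down the profile explicitly, with $A,B$ in positions near the top of most voters and the $C$ candidates arranged so their $\beta_i$- and $\beta_j$-scores are easy to compute; the arrangement should be ``palindromic'' between two of the types (one type ranking the $C$'s in increasing order near the top, the other in decreasing order near the bottom) so that the aggregate $C$-scores in $P_{-i}$ are balanced. (2) Compute the sincere $\beta_j$-scores: verify that $\beta_j(P) = A$ (after tie-breaking), identify the highest-scoring $C$ candidate, and exhibit a single voter who, by ranking $A$ last and the $C$'s shifted up by one position (the canonical manipulating vote of Lemma~\ref{lem:normalform}), causes some candidate ($B$ or a $C$) that this voter prefers to $A$ to become the winner; this will reduce to a short inequality in $q$ and $j$ analogous to the displayed computation at the end of Lemma~\ref{prop:oddij}. (3) Compute the sincere $\beta_i$-scores on the same profile: show $\beta_i(P) = A$ with a lead of at least one point over $B$ and over every $C$ candidate. (4) Invoke Lemma~\ref{lem:normalform} to argue that any manipulation of $\beta_i$ must take the normal form described there; show that for every voter type, the good candidates (those preferred to $A$) are a subset of $\{B\}\cup\{C\text{'s}\}$, and that pushing $A$ out of the top $i$ positions costs $A$ only $1$ point (since the relevant drop is from position $i$ or higher, where the marginal score difference bottoms out) while raising at most one good candidate, which still leaves $A$ winning by points or by tie-break.

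The main obstacle, as in the companion lemmas, is the bookkeeping in step (3)--(4): ensuring that \emph{simultaneously} the $\beta_i$-scores of $B$ and of the best $C$ candidate stay strictly below $A$'s score (or tie and lose the tie-break) for \emph{all} voter types, including the lone swing voter, while in step (2) the $\beta_j$-gap is small enough to be bridged. Because the same profile must serve both rules, the $C$-candidate positions have to be chosen carefully: they must contribute enough under $\beta_j$ to make the manipulation go through, but contribute little enough under $\beta_i$ (ideally zero, landing below position $i$) that the analogous manipulation fails. I expect this to work out because the scoring vector of $\beta_i$ is ``shorter'' than that of $\beta_j$, so filler candidates parked in positions $i+1,\dots,j$ are invisible to $\beta_i$ but visible to $\beta_j$; the final step is then just verifying the two inequalities, one of which (the $\beta_j$ side) should be essentially the inequality already checked in Lemma~\ref{prop:oddij}, and the other (the $\beta_i$ side) a strict-lead check of the form $A$'s tally $\geq B$'s tally and $\geq$ best $C$'s tally with the drop-by-one accounted for.
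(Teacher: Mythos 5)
Your proposal establishes the wrong direction of non-comparability. Unwinding the definition, $\beta_j \ps \beta_i$ means that every profile at which $\beta_i$ is manipulable is also one at which $\beta_j$ is manipulable; hence to prove $\beta_j \nps \beta_i$ you must exhibit a profile at which $\beta_i$ \emph{is} manipulable while $\beta_j$ is \emph{not}. Your plan does the opposite: in step (2) you arrange for a voter to successfully manipulate $\beta_j$, and in steps (3)--(4) you argue that no voter can manipulate $\beta_i$ on the same profile. Such a profile witnesses $\beta_i \nps \beta_j$, which is exactly \autoref{prop:oddij} (the lemma whose final inequality you say you will reuse), not the statement at hand. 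The tell is already in your first sentence: the even-$n$ companion of the present lemma is \autoref{prop:betajiqbigbordaeven}, not \autoref{prop:2qij}, which is what you chose to mirror.

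The reversal is not cosmetic, because your guiding heuristic --- filler candidates parked in positions $i+1,\dots,j$ are invisible to $\beta_i$ but visible to $\beta_j$, so shifting them up ``buys nothing'' under $\beta_i$ but bridges the gap under $\beta_j$ --- naturally produces profiles on which the \emph{longer} rule is the manipulable one. The direction actually required needs the opposite effect: a profile where the sincere winner's lead is exactly recoverable under $\beta_i$ but strictly too large under $\beta_j$. The paper's construction achieves this by making $B$ the sincere winner under both rules, with a lead over $A$ of exactly $i-1$ under $\beta_i$ (so a voter who ranks $B$ second can erase it by demoting $B$ to last place, forcing a tie that $A$ wins) but a lead of $(j-1)+(j-i)$ under $\beta_j$, where the surplus $j-i$ comes from a voter who ranks $B$ in position $i+1$ --- a position worth nothing under $\beta_i$ but worth $j-i$ points under $\beta_j$. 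So the ``invisible to $\beta_i$, visible to $\beta_j$'' device is indeed the right tool, but it must be used to pad the winner's margin under $\beta_j$, not to lift a challenger there. As written, your construction cannot be rescued by relabelling; the profile has to be redesigned around this reversed inequality.
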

\begin{proof}
Consider the following profile:

\[
\begin{array}{|c|c|c|c|c|c|}
\hline
   q\text{ voters of type 1} & ABC_{m-2}\dots C_{m-i+1}&C_{m-i}\dots C_{m-j+1}& C_{m-j}\dots C_1 \\
    \hline
    q-1\text{ voters of type 2} & BAC_1\dots C_{i-2}& C_{i-1}\dots C_{j-2}&C_{j-1}\dots  C_{m-2} \\
    \hline
    1\text{ voters of type 3} & BC_1\dots C_{i-1}& C_i\dots C_{j-1}&C_j\dots  C_{m-2} A\\
\hline
1\text{ voter of type 4} & C_1\dots C_i& B C_{i+1} \dots C_{j-1}&C_{j}\dots  C_{m-2} A \\
\hline
\end{array}  
\]

Under $\beta_i$, $A$ has $qi+(q-1)(i-1)=2qi-q-i+1$ points. $B$ has $qi+q(i-1)=2qi-q$. $C_1$ is clearly the highest scoring $C$ candidate, and has $i+(i-1)+(q-1)(i-2)=qi+i-2q+1$ points. If $i>1$, $B$ wins a points victory, but a voter of type 1 can rank $B$ last to force a tie, which $A$ will win. If $i=1$, then $A$ wins by tie-breaking, but the voter of type 4 can rank $B$ first to make $B$ the winner.

Under $\beta_j$, $A$ has $qj+(q-1)(j-1)$. $B$ has $qj+q(j-1)+(j-i)$. A voter of type 1 would like to manipulate in favour of $A$, but if he ranks $B$ last, $B$'s score will only drop by $j-1$, and $B$ will still win.

The voter of type 4 would rather see one of $C_1$ through $C_i$ win. Observe that an upper bound on the score a $C$ candidate can get from the voters of type 1 through 3 is $q(j-1)$ -- for $q-1$ voters of types 1 and 2, each time the first group gives the candidate $j-2-k$ points, the other gives at most $k$ points, for an upper bound of $(q-1)(j-2)$. The voter of type 3 ranks all $C$ candidates one position higher, so combined with the remaining voter of type 1 the contribution to the candidate's score is at most $j-1$, which gives a total of $(q-1)(j-2)+(j-1) <q(j-1)$. If the voter ranks $B$ last and the $C$ candidate first, $B$ will still have $qj+q(j-1)$ points to the $C$'s candidate $j+q(j-1)$, so $B$ will still win.
\end{proof}

\begin{lemma}\label{prop:betajiqbigbordaeven}
For $n=2q$, all $m$: if $q>2$, $\beta_{j}\nps \beta_i$.
\end{lemma}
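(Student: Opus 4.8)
The plan is to mimic the construction of \autoref{prop:betajibordaodd}, adapting it from an odd number of voters to an even number $n = 2q$. The target is to exhibit a single profile on which $\beta_i$ is manipulable but $\beta_j$ is not. Following the pattern of the odd case, I would use candidates of three sorts: a distinguished candidate $A$, a distinguished candidate $B$, and $m-2$ filler candidates $C_1,\dots,C_{m-2}$ arranged so that each $C_k$ is Pareto-dominated and only $C_1$ is a serious contender. The voter blocks should be chosen so that under $\beta_i$ the sincere winner is $B$ (or $A$ by tie-breaking when $i=1$), with some block of voters preferring $A$, able to force a tie that $A$ wins (resp.\ a block able to lift $B$ above $A$ when $i=1$); while under $\beta_j$ the scores of the same profile shift so that $B$'s lead over both $A$ and $C_1$ is robust enough that no single deviator — ranking $B$ last and shifting fillers up, per \autoref{lem:normalform} — can dislodge $B$.

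Concretely, I would start from a four-block profile analogous to the odd case: roughly $q$ voters of ``type 1'' ranking $A$ over $B$ over the $C$'s in one order, $q-1$ voters of ``type 2'' ranking $B$ over $A$ over the $C$'s in a mirrored order, one ``type 3'' voter ranking $B$ first and $A$ last, and one ``type 4'' voter burying both $A$ and $B$ among the fillers — with the exact orderings of the $C$ blocks chosen to keep every $C_k$ for $k \geq 2$ out of contention. The arithmetic to check is: (i) the $\beta_i$-scores of $A$, $B$, and $C_1$, verifying $B$ wins (or $A$ wins the tie when $i=1$) and that a type-1 voter ranking $B$ last forces an $A$-tie that $A$ takes (resp.\ the type-4 voter can promote $B$ when $i=1$); (ii) the $\beta_j$-scores, verifying $B$'s margin over $A$ exceeds $j-1$ (the most a deviator can shave off by moving $B$ from first to last) and that $B$'s margin over the best possible $C$-score a deviator can engineer is also safe. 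Step (ii) is where the hypothesis $q > 2$ must be used: for small $q$ the block sizes $q$ and $q-1$ are too close, and the single ``corrective'' voters of types 3 and 4 dominate the count, so $B$'s lead is not preserved — which is exactly why the cases $q = 1$ and $q = 2$ are excluded here and handled (or shown to fail) separately.

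The main obstacle I anticipate is tuning the $C$-block orderings and the type-3/type-4 rows so that the single inequality ``$B$'s $\beta_j$-lead over the best $C$-candidate a manipulator can build is positive'' holds for all $j$ and all $m$ simultaneously, while keeping the $\beta_i$-picture genuinely manipulable. As in the odd case, the clean way to handle the $C$-candidates is to bound, rather than compute exactly: observe that across the $q$ type-1 voters and $q-1$ type-2 voters the contributions to any fixed $C_k$ are ``complementary'' (when one block awards $j-2-\ell$ the mirrored block awards at most $\ell$), giving a per-pair bound of $j-2$ and hence an aggregate bound of order $q(j-1)$ once the corrective voters are folded in; then one shows $B$ beats this bound by its guaranteed points from the $B$-favouring blocks. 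I would also need the separate easy check that $i \geq 1$ does not break the $i=1$ sub-case (where the sincere winner flips to $A$ by tie-breaking), handled by the type-4 voter exactly as in \autoref{prop:betajibordaodd}. Once the scores are pinned down, invoking \autoref{lem:normalform} reduces the ``no manipulation of $\beta_j$'' claim to these finitely many inequalities, and $q > 2$ is precisely the slack that makes them go through.
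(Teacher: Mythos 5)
Your overall strategy is the paper's: exhibit one profile with candidates $A$, $B$, and fillers $C_1,\dots,C_{m-2}$, built from mirrored $A\succ B$ and $B\succ A$ blocks plus a few corrective voters, bound the $C$-scores by the complementarity argument, and close the $\beta_j$ side with \autoref{lem:normalform}. But as written the sketch has concrete gaps. First, your block sizes $q+(q-1)+1+1$ sum to $2q+1$, not $2q$; the paper's profile uses $q-1$ type-1 voters, $q-2$ type-2 voters, and \emph{three} corrective voters. Second, and more seriously, the linchpin of the construction is not articulated: the corrective voters who ``bury'' $B$ must place $B$ in position exactly $i+1$ --- outside the top $i$ but inside the top $j$ --- so that each such voter gives $B$ an extra $j-i$ points relative to $A$ when passing from $\beta_i$ to $\beta_j$. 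With two such voters, $B$'s $\beta_j$-lead over $A$ is $(j-1)+2(j-i)$, strictly more than the $j-1$ points a type-1 deviator can shave off by ranking $B$ last, whereas under $\beta_i$ the lead is only $i-1$, exactly what that deviator removes (after which $A$ wins the tie). If $B$ is literally buried among the fillers below position $j$, $B$'s relative standing does not improve at $\beta_j$, and the manipulation that works at $\beta_i$ carries over to $\beta_j$, killing the construction. Third, your account of where $q>2$ enters is off: the $A$-versus-$B$ margin is controlled by the $2(j-i)$ cushion for every $q\geq 2$; the hypothesis $q>2$ is what makes $B$'s guaranteed $\beta_j$-score from the first three voter types, $(q-1)j+(q-1)(j-1)$, exceed the roughly $2j+(q-1)(j-1)$ that a type-4 or type-5 deviator can engineer for a $C$ candidate (the inequality reduces to $(q-1)j\geq 2j$), and similarly keeps the $C$'s out of contention under $\beta_i$.

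You should also flag the degenerate case $m=3$: there the mirrored $C$-orderings collapse ($C_1=C_{m-2}$), the pairwise bounding of the fillers no longer applies as stated, and the paper disposes of it by a separate direct computation on the collapsed three-candidate profile. None of these issues is fatal to the plan --- it is the right plan --- but until the block sizes, the position of $B$ in the corrective ballots, and the $m=3$ case are pinned down, the proof is not there.
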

\begin{proof}
Consider the following profile:
\[
\begin{array}{|c|c|c|c|c|c|}
\hline
   q-1\text{ voters of type 1} & ABC_{m-2}\dots C_{m-i+1}&  C_{m-i}  \dots C_{m-j+1}& C_{m-j}\dots C_1 \\
\hline
    q-2\text{ voters of type 2} & BA C_{1}\dots C_{i-2}& C_{i-1}  \dots C_{j-2}& C_{j-1}\dots C_{m-2}  \\
\hline
1\text{ voter of type 3} & BC_{1}\dots C_{i-1}& C_{i}  \dots C_{j-1}& C_{j}\dots C_{m-2} A \\
\hline
1\text{ voter of type 4} & C_{1}\dots C_{i}&B C_{i+1}  \dots C_{j-1}& C_{j}\dots C_{m-2} A \\
\hline
1\text{ voter of type 5} & C_{m-2}\dots C_{m-i-1}&B C_{m-i-2}  \dots C_{m-j}& C_{m-j-1}\dots C_{1} A \\
\hline
\end{array}  
\]

Case one: $m>3$, and hence $C_1\neq C_{m-2}$.

Under $\beta_i$, $A$ has $(q-1)i+(q-2)(i-1)$ points and $B$ has $(q-1)i+(q-1)(i-1)$. A $C$ candidate has at most $(q-2)(i-2)+(i-1)+(i+1)$ -- observe that if the candidate gets $i-2-k$ points from a voter of type 1, he gets at most $k$ from a voter of type 2, which gives us at most $(q-2)(i-2)$ from $q-2$ of each type of voter; the voter of type 3 gives one more point to the candidate, so paired with the remaining voter of type 1, the contribution is $i-1$; as for the voters of voters of type 4 and 5, if one gives the candidate $i-k$ points, the other gives at most $k+1$, for the remaining $(i+1)$.

Since $q>2$, $A$ and $B$ have more points than the $C$ candidates. If $i>1$, $B$ also beats $A$ by points, but a voter of type 1 can rank $B$ last and shift the $C$ candidates up one to force a tie between $A$ and $B$. This operation will raise the score of a $C$ candidate by at most 1, so such a candidate will at worst enter the tie, which $A$ wins. If $i=1$, $A$ wins by tie-breaking, but a voter of type 4 can rank $B$ first to give him one more point.

Under $\beta_j$, $A$ has $(q-1)j+(q-2)(j-1)$ points, $B$ has $(q-1)j+(q-1)(j-1)+2(j-i)$. $B$ has more points than $A$, and a voter of type 1 can no longer change this by ranking $B$ last.

A voter of type 4 would like to manipulate in favour of one of $C_1$ through $C_i$. As we have argued above, such a candidate would get no more than $(q-2)(j-2)+(j-1)$ from voters of types 1 through 3, which we round up to $(q-1)(j-1)$. If the manipulator ranks this candidate first, he will get at most $2j$ from the voters of type $4,5$ for an upper bound of $2j+(q-1)(j-1)$. In comparison, $B$ gets $(q-1)j+(q-1)(j-1)$ from the voters of type 1 through 3. Since $q>2$, $B$ would still win. The argument for the voter of type 5 is analogous.

Case two: $m=3$. The same profile we had above collapses to the following:

\[
\begin{array}{|c|c|c|c|c|c|}
\hline
   q-1\text{ voters of type 1} & A&B&C \\
\hline
    q-2\text{ voters of type 2} & B&A& C\\
\hline
1\text{ voter of type 3} & B&C& A \\
\hline
1\text{ voter of type 4} & C&B& A \\
\hline
1\text{ voter of type 5} & C&B&A \\
\hline
\end{array}  
\]
The argument with respect to $A$ and $B$ is unchanged. We need only verify that $C$ cannot win under $\beta_i$ or $\beta_j$.

Under $\beta_i$, $C$ has exactly 2 points. $A$ and $B$ are tied with $q-1$, and the voter of type 4 can only manipulate in favour of $B$ by ranking $B$ first.

Under $\beta_j$, $C$ has exactly 5 points. $B$ has at least 8, and a voter of type 4 or 5 can only lower $B$'s score by one point.
\end{proof}

\begin{lemma}\label{prop:betajiqbigbordafour}
For $n=4$, all $m$: $\beta_{j}\nps \beta_i$.
\end{lemma}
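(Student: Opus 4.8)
The plan is to exhibit a single profile $P$ on which $\beta_i$ is manipulable but $\beta_j$ is not, in the spirit of \autoref{prop:betajiqbigbordaeven} but compensating for the fact that the $q-2$ voters of type~2 have now vanished (here $q=2$). As there, the construction uses two distinguished candidates $A,B$ together with filler candidates $C_1,\dots,C_{m-2}$, distributed among four voter types so that under $\beta_i$ the $A$-versus-$B$ contest hangs on a single point (or a priority tie), making it flippable by one ballot, while under $\beta_j$ the extra score $j-i$ handed out at positions $i+1,\dots,j$ blows the winner's margin past anything one ballot can swing.

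First I would pin down the winners. I would arrange the four ballots so that $B$ is the $\beta_j$-winner with a robust lead, while under $\beta_i$ either $B$ wins by exactly one point over $A$ or $A$ wins from $B$ on priority. Manipulability of $\beta_i$ then follows in the familiar way: by \autoref{lem:normalform} it is enough to check one normal-form deviation -- a voter who prefers $A$ burying $B$ to force an $A$-favouring tie when $i>1$, or, in the sub-case $i=1$, a voter who prefers $B$ promoting $B$ to the top, exactly as the two branches in \autoref{prop:betajiqbigbordaeven}.

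The substance is showing $\beta_j$ is not manipulable, and here I again lean on \autoref{lem:normalform}: for each voter it suffices to rule out the single deviation that ranks their top good candidate first and buries the bad candidates in reverse-score order. A voter preferring $A$, or preferring some $C_\ell$, to $B$ will find that this deviation either (a) cannot dislodge $B$, whose $\beta_j$-margin exceeds the at most $j-1$ points $B$ sheds by being buried or the at most $j$ points a rival gains from the top slot, or (b) does lower $B$ but, by a pigeonhole count of the positions available below rank $j$, is forced to raise the score of a candidate the manipulator likes even less than $B$ (typically the other of $A,B$, or a different $C$). The crux -- and the main obstacle -- is calibrating the four ballots so that (a) and (b) hold simultaneously without the cushion that the type-2 voters provided for $q>2$: with only four voters every score gap is tiny, so each filler $C_\ell$ must be paired against a suitably low slot on the other ballots to keep all $C$ candidates safely under both $A$ and $B$ under $\beta_j$ while still permitting the $\beta_i$ flip.

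Finally I would clear the degenerate small-$m$ cases where the filler candidates thin out -- in particular $m=3$, where the ballots collapse to a three-candidate profile -- checking directly, as in the second case of \autoref{prop:betajiqbigbordaeven}, that the surviving filler candidate $C$ can win under neither rule and that the $A$--$B$ analysis is unchanged.
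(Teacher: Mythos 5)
Your plan correctly identifies the shape of the argument — the paper does exactly what you describe: it exhibits one explicit profile (two for the sub-cases $i>1$ and $i=1$) with distinguished candidates $A,B$ and fillers $C_1,\dots,C_{m-2}$, shows a one-ballot deviation flips the $\beta_i$ outcome, and uses score bounds plus \autoref{lem:normalform} to rule out every $\beta_j$ deviation. But you have not actually produced the profile, and you say yourself that ``calibrating the four ballots \dots is the main obstacle.'' That calibration \emph{is} the proof; everything else is routine verification. Without concrete ballots there is nothing to check: no score for $B$ under $\beta_j$, no margin to compare against the at most $j-1$ points a single voter can strip from $B$, and no way to confirm that burying $B$ does not promote a worse candidate. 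As it stands the proposal is a restatement of the proof obligation, not a discharge of it.

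Two further signs that the details have not been worked through. First, in the paper's construction for $i>1$ the $A$--$B$ gap under $\beta_i$ is not one point but exactly $i-1$ — precisely the amount the type-2 voter sheds by demoting $B$ from second place to last — so your framing of ``hangs on a single point'' would need adjusting. Second, in the sub-case $i=1$ the paper's profile has $A$, not $B$, as the $\beta_j$-winner, whereas your plan fixes $B$ as the $\beta_j$-winner in both branches; a single template covering both sub-cases may not exist, and the paper indeed uses two structurally different profiles. To complete the proof you must write down the four ballots for each sub-case and carry out the score arithmetic.
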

\begin{proof}
Case one: $i>1$.
\[
\begin{array}{|c|c|c|c|c|c|}
\hline
   2\text{ voters of type 1} & BC_{m-2}\dots C_{m-i}&  C_{m-i-1}  \dots C_{m-j}& C_{m-j-1}\dots C_1A \\
\hline
1\text{ voter of type 2} & ABC_{1}\dots C_{i-2}& C_{i-1}  \dots C_{j-2}& C_{j-1}\dots C_{m-2} \\
\hline
1\text{ voter of type 3} & AC_{1}\dots C_{i-1}& C_{i}  \dots C_{j-2}B& C_{j-1}\dots C_{m-2} \\
\hline
\end{array}  
\]

Under $\beta_i$, $B$ wins with $2+i-1$ points. The voter of type 2 can manipulate by ranking $B$ last.

Under $\beta_j$, $B$ wins with $3j$ points. If the voter of type 2 ranks $B$ last, $B$ will still have $2j+1$, beating $A$. The voter of type 3, likewise, cannot manipulate in favour of $A$, but could try to manipulate in favour of a $C$ candidate. If he ranks $B$ last and $C_{i-1}$ first, then $B$ will have a score of $3j-1$. We can bound $C_{i-1}$'s score by $j$ (from one voter of type 1 and type 2) $+j$ (the manipulator ranks $C_{i-1}$ first) $+x$ (the points from the remaining voter of type 1). In order for $C_{i-1}$ to win, we must have $2j+x > 3j-1$, which is clearly impossible.

Case two: $i=1$.
\[
\begin{array}{|c|c|c|c|c|c|}
\hline
   2\text{ voters of type 1} & B&AC_{m-2}  \dots C_{m-j+1}& C_{m-j}\dots C_1 \\
\hline
1\text{ voter of type 2} & A&C_{1}\dots C_{j-1}& C_{j-1}\dots C_{m-2}B \\
\hline
1\text{ voter of type 3} & C_{1}&AC_2\dots C_{j-1}& C_{j}\dots C_{m-2}B \\
\hline
\end{array}  
\]
Under $\beta_1$, $B$ is the winner, but the voter of type 3 can manipulate in favour of $A$.

Under $\beta_j$, $A$ has $4j-3$ points to $B$'s $2j$. Since $j\geq 2$, $A$ is the winner. A voter of type 1 would rather see $B$ win, but even if he ranks $A$ last, $A$ will still have $3j-2\geq 2j$ points. A voter of type 3 would rather see $C_1$ win, but $C_1$ has $2j-1$ points, so would lose to $B$ no matter what the voter does.
\end{proof}

\begin{corollary}\label{corbetaji}
For $n>2$, all $m$: $\beta_{j}\nps \beta_i$.
\end{corollary}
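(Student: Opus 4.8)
The statement follows by assembling the three preceding lemmas, so the plan is a straightforward case analysis on $n$. First I would split on the parity of $n$. If $n$ is odd, then since $n>2$ we have $n=2q+1$ with $q\ge 1$, and \autoref{prop:betajibordaodd} (stated for $n=2q+1$ and all $m$) immediately yields $\beta_j\nps\beta_i$. If $n$ is even, then $n=2q$ with $q\ge 2$, and here I would split once more: for $q>2$ (equivalently $n\ge 6$) \autoref{prop:betajiqbigbordaeven} applies, and for the single remaining value $q=2$ (equivalently $n=4$) \autoref{prop:betajiqbigbordafour} applies. These three cases exhaust all $n>2$ for every $m$, which gives the corollary.

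The only thing that genuinely needs checking is that the hypotheses of the three lemmas really do tile the range $n>2$, and the count above shows that they do; there is no obstacle at the level of the corollary itself. Whatever difficulty there is has already been discharged inside the lemmas, and it is worth recording where: the even case could not be covered by a single construction because the profile witnessing non-manipulability of $\beta_j$ for large $q$ relies on the two frontrunner candidates $A$ and $B$ out-scoring every $C$ candidate by a margin that grows with $q$, and that margin degenerates when $q=2$ --- so $n=4$ required a bespoke pair of profiles with a further internal split on whether $i=1$. The case $n=2$ is deliberately left out, since that is precisely where the reverse comparison $\beta_j\pss\beta_i$ can hold, so no uniform argument of this kind could have absorbed it. The substantive verifications --- that the designated manipulator in each $\beta_j$ profile has no profitable deviation, via \autoref{lem:normalform} --- all live in the lemmas, not here.
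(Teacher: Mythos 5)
Your proof is correct and is exactly the paper's argument: the corollary is proved by citing \autoref{prop:betajibordaodd} for odd $n$, \autoref{prop:betajiqbigbordaeven} for even $n$ with $q>2$, and \autoref{prop:betajiqbigbordafour} for $n=4$, and your case split verifies that these hypotheses tile all $n>2$. The additional commentary on why the even case needed two separate constructions is accurate but not part of the paper's (one-line) proof.
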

\begin{proof}
\autoref{prop:betajibordaodd}, \autoref{prop:betajiqbigbordaeven}, and \autoref{prop:betajiqbigbordafour}.
\end{proof}

\begin{theorem}\label{thm:bordainc}
For $n>2$, all $m$: $\beta_i\xps\beta_j$.
\end{theorem}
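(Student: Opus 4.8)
The plan is to assemble the two non-domination facts that have already been established, since by definition $\beta_i\xps\beta_j$ means exactly $\beta_i\nps\beta_j$ \emph{and} $\beta_j\nps\beta_i$. So it suffices to verify each half separately and then combine.

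For the direction $\beta_i\nps\beta_j$, I would invoke \autoref{cor:betaibetaj}, which holds for all $n$ and $m$ and is obtained by splitting on the parity of $n$: \autoref{prop:2qij} exhibits, for even $n=2q$, a profile manipulable under $\beta_i$ but not under $\beta_j$, and \autoref{prop:oddij} does the same for odd $n=2q+1$. Since every $n$ is even or odd, this covers all $n>2$ in particular. For the direction $\beta_j\nps\beta_i$, I would invoke \autoref{corbetaji}, which is stated precisely for $n>2$ and is assembled from three lemmas covering disjoint ranges of $n$: \autoref{prop:betajibordaodd} handles all odd $n\geq 3$, \autoref{prop:betajiqbigbordaeven} handles even $n=2q$ with $q>2$ (i.e.\ $n\geq 6$), and \autoref{prop:betajiqbigbordafour} handles the remaining case $n=4$. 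Hence for every $n>2$ and every $m$ there is a profile manipulable under $\beta_j$ but not under $\beta_i$. Putting the two directions together yields $\beta_i\xps\beta_j$ for all $n>2$, all $m$.

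The closest thing to an obstacle is purely bookkeeping: one must check that the case splits in the two corollaries genuinely exhaust $n>2$ without gap (even vs.\ odd for the first direction; odd $n\geq 3$, even $n\geq 6$, and $n=4$ for the second), and note that $n=2$ is deliberately excluded here, since that is the exceptional regime treated separately where $\beta_j\pss\beta_i$. All the substantive work — constructing the witnessing profiles, performing the score computations, and using \autoref{lem:normalform} to reduce a would-be manipulator's deviations to the canonical vote $P_i^*$ — has already been carried out in the lemmas, so no further argument is needed.
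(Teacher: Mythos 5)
Your proposal is correct and matches the paper's own proof exactly: Theorem~\ref{thm:bordainc} is obtained by combining \autoref{cor:betaibetaj} (giving $\beta_i\nps\beta_j$ for all $n,m$) with \autoref{corbetaji} (giving $\beta_j\nps\beta_i$ for $n>2$). Your accounting of how the underlying lemmas exhaust the case split is accurate and requires no further argument.
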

\begin{proof}
\autoref{cor:betaibetaj} and \autoref{corbetaji}.
\end{proof}

Thus far the story resembles that of $k$-approval. However, in the case of $n=2$, a hierarchy of manipulability is observed:

\begin{theorem}\label{thm:bordan2}
For $n=2$, $m>k+2$: $\beta_{k+1} \ps \beta_{k}$.
\end{theorem}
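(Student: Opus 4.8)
The claim is that for two voters and $m>k+2$ candidates, every profile at which $\beta_k$ is manipulable is also a profile at which $\beta_{k+1}$ is manipulable. The plan is to take an arbitrary profile $P=(P_1,P_2)$ at which $\beta_k$ is manipulable, say by voter $1$ (the case of voter $2$ is symmetric), and construct an explicit manipulation of $\beta_{k+1}$ at the \emph{same} profile. By \autoref{lem:normalform}, if voter $1$ can manipulate $\beta_k$ at $P$, then the winner $w=\beta_k(P)$ is not voter $1$'s top choice among the candidates, and voter $1$ can change the outcome to some ``good'' candidate $g$ (one they strictly prefer to $w$) by the normal-form vote. The key structural observation to extract is: a manipulation of a scoring rule by one of two voters amounts to voter $1$ being able to ``promote'' some good candidate $g$ past the incumbent $w$ by moving $g$ up in $P_1$, without simultaneously promoting an even-worse candidate past $g$. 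The heart of the argument will be to show that the extra weight $\beta_{k+1}$ places on position $k+1$ (and the reshuffled weights generally) only makes this easier, not harder.

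The main steps I would carry out: (1) Fix notation for the scores. Under $\beta_k$ the score vector is $(k,k-1,\dots,1,0,\dots,0)$; under $\beta_{k+1}$ it is $(k+1,k,\dots,1,0,\dots,0)$, i.e.\ $\beta_{k+1}$'s weight in position $i\le k+1$ is exactly $\beta_k$'s weight in position $i$ plus one, and position $k+1$ goes from $0$ to $1$. Crucially, $\beta_{k+1}(c,P)=\beta_k(c,P)+(\text{number of voters ranking }c\text{ in top }k+1)$, and with $n=2$ this correction term is $0$, $1$, or $2$. (2) Use the manipulability of $\beta_k$ to get that $g$ beats every other candidate (by score, ties broken lexicographically) after voter $1$ submits $P_1^*$ from \autoref{lem:normalform}. (3) Starting from this successful $\beta_k$-manipulation, I would argue that the ``same style'' manipulation works for $\beta_{k+1}$: voter $1$ ranks $g$ first, then fills the remaining top-$(k+1)$ slots of $\beta_{k+1}$ with the next-best good candidates, and dumps the bad candidates at the bottom in reverse score order. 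Because $m>k+2$, there is at least one position strictly below $k+1$, so the bottom bad candidate can always be pushed out of the scoring range entirely — this is where the hypothesis $m>k+2$ is used, to guarantee room for the lexicographically-worst bad candidate to receive $0$ points from voter $1$ under $\beta_{k+1}$, mirroring what happened under $\beta_k$. (4) Compare the post-manipulation scores: $g$ gains $k+1$ from voter $1$ (versus $k$ under $\beta_k$), while every bad candidate's score under $\beta_{k+1}$ exceeds its $\beta_k$-score by at most its top-$(k+1)$ membership count, which I would bound using the positions they occupy in $P_1^*$ and in voter $2$'s ballot. The good candidates other than $g$ are either irrelevant (voter $1$ is happy if any of them wins) or need to be checked not to overtake $g$ — but since they are all ranked in the top of $P_1^*$ anyway and $g$ is ranked first, $g$ retains its lexicographic/score advantage exactly as it did under $\beta_k$.

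**Main obstacle.** The delicate point is the bad candidates. Under $\beta_{k+1}$ a bad candidate that voter $1$ was able to keep at $0$ points under $\beta_k$ (by ranking it in position $>k$) might now get $1$ point if voter $1$ is forced to rank it in position exactly $k+1$; simultaneously voter $2$'s ballot might give that candidate an extra point under $\beta_{k+1}$ that it did not get under $\beta_k$. So a bad candidate's score can jump by as much as $2$ when we pass from $\beta_k$ to $\beta_{k+1}$, whereas $g$'s score jumps only by $1$ (from voter $1$) plus at most $1$ (from voter $2$). I will need to show the bad candidate's double-jump cannot actually materialise against the particular $g$ we promote: if voter $2$ ranks a bad candidate $b$ in its top $k+1$, then $b$ was already reasonably high in $P_{-1}$, so under the original $\beta_k$-manipulation voter $1$ must have had enough ``room'' below position $k$ to bury $b$ — and $m>k+2$ gives exactly one extra slot, which is what lets us re-bury $b$ below position $k+1$ under $\beta_{k+1}$. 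Making this counting argument airtight — showing that the number of bad candidates voter $2$ ranks in the top $k+1$ never exceeds the number of sub-$(k+1)$ positions available to voter $1$, given that the analogous inequality held for $\beta_k$ — is the crux, and I expect it to lean on the $n=2$ hypothesis (so the correction terms are small and controllable) and on $m\ge k+3$ in an essential way; indeed the theorem is stated to fail when $m=k+1$, i.e.\ for full Borda, which is the sanity check that the room-counting is the right bottleneck.
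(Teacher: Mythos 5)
There is a genuine gap: your whole plan measures voter $1$'s improvement against $w=\beta_k(P)$, but manipulability of $\beta_{k+1}$ at $P$ is defined relative to the sincere winner of $\beta_{k+1}$ itself, namely $e=\beta_{k+1}(P)$, and these two candidates need not coincide. Passing from $\beta_k$ to $\beta_{k+1}$ raises each candidate's score by $0$, $1$ or $2$ (the number of voters ranking it in the top $k+1$), so the sincere winner can change; when it does, the candidate $g$ you promote may satisfy $g\succ_1 w$ but not $g\succ_1 e$, and indeed voter $1$ may rank $e$ first and be unable to manipulate $\beta_{k+1}$ at all, so the witness of manipulability must be voter $2$. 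Concretely, take $k=1$, $m=4$, tie-breaking $A>B>C>D$, $P_1=C\succ A\succ D\succ B$ and $P_2=B\succ C\succ A\succ D$. Then $\beta_1(P)=B$ and voter $1$ manipulates $\beta_1$ by voting $A$ first; but $\beta_2(P)=C$ is voter $1$'s top choice, so voter $1$ cannot manipulate $\beta_2$, and forcing $A$ to win would make voter $1$ strictly worse off. Here $\beta_2$ is still manipulable at $P$, but only by voter $2$, in favour of $B$. The paper's proof splits on whether $\beta_k(P)=\beta_{k+1}(P)$; in the unequal case it shows the profile is forced into a rigid shape (one voter ranks the $\beta_k$-winner first, the other ranks it outside the top $k+1$) and exhibits a manipulation by the \emph{other} voter. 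Your proposal contains no mechanism for this case, and the ``symmetric in the two voters'' remark does not supply one, since the identity of the successful manipulator genuinely switches.

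A smaller remark: in the case $\beta_k(P)=\beta_{k+1}(P)$ your outline is essentially the paper's Case one, but the ``main obstacle'' you flag (bad candidates gaining $2$ while $g$ gains less) is resolved there by a cleaner dichotomy than the slot-counting you sketch. If $g$'s score rises by $2$, nothing can overtake it, since no score rises by more than $2$ and ties break consistently; if it rises by only $1$, then voter $2$ ranks $g$ outside the top $k+1$, which forces $g$'s post-manipulation $\beta_k$-score to be exactly $k$, hence every other candidate's to be at most $k$, and $m>k+2$ then guarantees that only voter $2$'s top candidate reaches $k+1$ alongside $g$ under $\beta_{k+1}$ (this is exactly where the hypothesis enters, as you correctly guessed). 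So that half of your plan can be completed, but the theorem does not follow without the missing case.
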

\begin{proof}
Let voter 1's preferences be $c_1\succ_1\dots\succ_1c_m$ and voter 2's $b_1\succ_2\dots\succ_2b_m$. Note that $c_1\neq b_1$, else manipulation would not be possible.

Let $\beta_k(P_1,P_2)=d$ and $\beta_{k+1}(P_1,P_2)=e$.  We consider whether or not $d=e$ by cases.

Case one: $d=e$.

Assume voter 1 can manipulate $\beta_k$ in favour of $c_q\succ_1 d$. By \autoref{lem:normalform}, this means $c_q$ is the winner in the following profile:

\[
\begin{array}{|c|c|c|c|}
\hline
     P_1^* & c_q b_{m}\dots b_{m-k+1}&b_{m-k} &b_{m-k-1}\dots b_1 \\
\hline
     P_2 & b_1  \ldots b_k& b_{k+1} & b_{k+1}\dots b_m \\
\hline
\end{array}  
\]

Let us consider who the winner must be under $\beta_{k+1}(P_1^*,P_2)$. Observe that the score of a candidate under $\beta_{k+1}$ is at most two points higher than under $\beta_k$ -- it will increase by one point for each voter who ranks the candidate in the top $k+1$ positions.

If $c_q$'s points increase by 2 points then we are done -- whenever $c_q$ has more points than $f$ under $\beta_k$, $c_q$ still has more points under $\beta_{k+1}$; and if $c_q$ is tied with $f$ under $\beta_k$ then that must mean $c_q$ beats $f$ in the tie, under $\beta_{k+1}$ $c_q$ will either tie with $f$ and win the tie, or have more points outright. Thus voter 1 can manipulate in favour of $c_q\succ_1 d$.

If $c_q$'s points increase by 1, then that must mean that voter 2 does not rank $c_q$ in the top $k+1$ positions, and a fortiori in the top $k$ positions. Thus under $\beta_k(P_1^*,P_2)$ $c_q$ has $k$ points, $b_1$ has $k$ points, and the other candidates strictly less. Under $\beta_{k+1}$ $c_q$ and $b_1$ will still tie at $k+1$, and, since $m>k+2$, the other candidates will still have strictly less.

Case two: $d\neq e$.

As we have argued before, the score of a candidate can increase by at most two points when going from $\beta_k$ to $\beta_{k+1}$. Since $d$ wins under $\beta_k$ but $e$ wins under $\beta_{k+1}$, this means that $e$'s score must increase by 2 and $d$'s by 1. This means that one voter does not rank $d$ in the top $k+1$ positions, and, since $d$ must still win under $\beta_k$, this means the other voter must rank $d$ first (at least one candidate will have a score of $k$, so the winner's score must be at least $k$). Since voter 1 is the one with an incentive to manipulate, this means the sincere profile must be the following: 

\[
\begin{array}{|c|c|c|c|}
\hline
     \text{Voter }1 & c_1 \ldots c_{i-1} e c_{i+1} \dots c_m \\
\hline
     \text{Voter }2 & d b_2 \ldots b_{j-1} eb_{j+1} \dots b_m \\
\hline
\end{array}  
\]

Under $\beta_k$ $d$ either has one point more than $e$, or they are tied and $d$ wins the tie. Under $\beta_{k+1}$ $e$ wins, which means $d$'s score increases by 1 and $e$'s by two -- thus $d$ cannot be in the top $k+1$ positions of voter 1. But this means in the sincere profile both $d$ and $c_1$ have $k$ points under $\beta_k$, and $d$ wins the tie. Voter 2 can thus manipulate $\beta_{k+1}$ as follows:

\[
\begin{array}{|c|c|c|c|}
\hline
     \text{Voter }1 & c_1 \ldots c_{i-1} e c_{i+1} \dots c_m \\
\hline
     \text{Voter }2 & d c_m \dots c_1 \\
\hline
\end{array}  
\]

Both $d$ and $c_1$ have $k+1$ points, since $m>k+2$ the other candidates have strictly less, and $d$ wins the tie.
\end{proof}

\begin{corollary}\label{cor:main}
For $n=2$, $k<m-2$: $\beta_{j}\pss\beta_i$.
\end{corollary}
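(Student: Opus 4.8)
The plan is to read off $\beta_j\pss\beta_i$ from the pieces already established, with essentially no new work. Recall that $f\pss g$ abbreviates ``$f\ps g$ and $g\nps f$''; taking $f=\beta_j$ and $g=\beta_i$, it therefore suffices to establish (i) $\beta_j\ps\beta_i$ and (ii) $\beta_i\nps\beta_j$.

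Claim (ii) is immediate: \autoref{cor:betaibetaj} gives $\beta_i\nps\beta_j$ for every $n$ and $m$, so in particular for $n=2$. Nothing further is needed there.

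For claim (i) the plan is to telescope \autoref{thm:bordan2}, which says that for $n=2$ and $m>k+2$ we have $\beta_{k+1}\ps\beta_k$. The relation $\ps$ is transitive --- unravelling the definition, $f\ps g$ is exactly the inclusion of the set of profiles manipulable for $g$ in the set of profiles manipulable for $f$, and set inclusion is transitive --- so chaining the single-step comparisons $\beta_{i+1}\ps\beta_i$, $\beta_{i+2}\ps\beta_{i+1}$, $\dots$, $\beta_j\ps\beta_{j-1}$ yields $\beta_j\ps\beta_i$. The one thing to verify is that \autoref{thm:bordan2} legitimately applies at each link, i.e.\ that $m>k+2$ holds for every $k$ with $i\le k\le j-1$; the most demanding of these is $k=j-1$, and the numeric hypothesis of the corollary ($j$ kept strictly below $m-1$, as encoded in the $k<m-2$ condition) is precisely what supplies $m>(j-1)+2$, hence all the weaker instances as well. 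Combining (i) and (ii) gives $\beta_j\pss\beta_i$.

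There is no genuine obstacle here: the corollary is a bookkeeping consequence of \autoref{thm:bordan2}, \autoref{cor:betaibetaj}, and the transitivity of $\ps$. The only point that calls for a moment's attention is checking that the inequality on $m$ survives as one walks down the finitely many intermediate indices from $j$ to $i$, which it does because the binding case is the one nearest $j$.
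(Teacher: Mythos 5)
Your argument is correct and is exactly the paper's proof: chain \autoref{thm:bordan2} through the intermediate indices using transitivity of $\ps$ to get $\beta_j\ps\beta_i$, and invoke \autoref{cor:betaibetaj} for $\beta_i\nps\beta_j$. Your extra check that $m>k+2$ holds at every link (with $k=j-1$ the binding case) is a worthwhile point of care that the paper leaves implicit.
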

\begin{proof}
Transitivity of $\ps$, \autoref{thm:bordan2}, and \autoref{cor:betaibetaj}.
\end{proof}

To finish, we observe that the proviso that $m>k+2$ really is necessary -- the Borda rule proper ($\beta_{m-1}$) is incomparable with $\beta_k$.

\begin{proposition}\label{prop:betakborda}
For $n=2$, all $m$: $\beta_{m-1}\nps \beta_k$.
\end{proposition}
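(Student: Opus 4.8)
The plan is to exhibit, for every $m\ge3$ and every $k$ with $1\le k\le m-2$, a \emph{single} two-voter profile at which $\beta_k$ is manipulable but $\beta_{m-1}$ is not; together with \autoref{cor:betaibetaj} (which gives $\beta_k\nps\beta_{m-1}$) this makes $\beta_{m-1}$ incomparable with every truncated Borda rule, the family pairs left open by \autoref{cor:main}. Fix candidates $w$, $t_1$, $t_2$ with $w$ first in the tie-breaking order and $t_1$ before $t_2$, and fillers $D_1,\dots,D_{m-3}$, and consider
\[
\begin{array}{|c|c|}
\hline
\text{Voter }1 & t_1\succ w\succ D_1\succ\dots\succ D_{m-3}\succ t_2\\
\hline
\text{Voter }2 & t_2\succ w\succ D_1\succ\dots\succ D_{m-3}\succ t_1\\
\hline
\end{array}
\]
The intuition is that $w$, lying second on both ballots, is insulated under Borda, whereas the weightless bottom positions of $\beta_k$ let a voter bury $w$ without the fillers climbing high enough to overtake that voter's top choice.

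First I would compute the sincere outcomes. Under $\beta_{m-1}$, $w$ scores $2(m-2)$, which is no less than $t_1$'s and $t_2$'s $m-1$ and strictly more than any $D_i$'s (at most $2(m-3)$), so $w$ wins (outright for $m\ge4$, by the tie-break for $m=3$). Under $\beta_k$ an analogous count makes $w$ the winner when $k\ge2$ and makes $t_1$ the winner when $k=1$ (plurality). To see $\beta_k$ is manipulable: for $k\ge2$, feed voter 1 the deviation prescribed by \autoref{lem:normalform} with target $t_1$ and verify that $t_1$ and $t_2$ then have $k$ points while everything else has strictly fewer, so $t_1$ — ahead of $t_2$ in the tie-break — wins, which voter 1 strictly prefers to $w$; for $k=1$, where $t_1$ already wins, do the same with voter 2 (for whom $t_1$ is the worst candidate) and verify the prescribed deviation elects $w$.

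The substantive step, and the one I expect to be the main obstacle, is that $\beta_{m-1}$ is not manipulable: Borda lets a voter permute the whole ballot, so in principle one must rule out all deviations, but \autoref{lem:normalform} again reduces this to one canonical deviation per voter. Since $t_1$ is voter 1's only candidate above $w$, voter 1's canonical deviation is $t_1\succ D_{m-3}\succ\dots\succ D_1\succ w\succ t_2$, and the key computation is that under it $t_1$, $t_2$, $w$, and every $D_i$ tie at exactly $m-1$, so the tie-break re-elects $w$ and voter 1 gains nothing; voter 2 is symmetric, with $w$ — first in the tie-break — again winning the resulting all-way tie. The example is this finely balanced for a reason: promoting $t_1$ pushes precisely enough weight onto $t_2$ and the fillers to level all of them with $w$ under Borda, while under $\beta_k$ the same move still leaves $t_1$ strictly on top. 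With both voters' best deviations shown to fail, $\beta_{m-1}$ is not manipulable at this profile, hence $\beta_{m-1}\nps\beta_k$.
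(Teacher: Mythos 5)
Your proof is correct and takes essentially the same approach as the paper: a two-voter profile in which the canonical deviation of \autoref{lem:normalform} under $\beta_{m-1}$ produces an all-candidate tie at $m-1$ points that the tie-break resolves unprofitably for the manipulator, while the truncation in $\beta_k$ lets the same burying manoeuvre succeed. The only difference is cosmetic -- your profile is symmetric and the all-way tie is won by the sincere winner $w$ itself, whereas the paper's tie is won by a candidate strictly worse for the manipulator.
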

\begin{proof}
Consider the following profile, with a $B$ candidate, a $C$ candidate, and $m-2$ $A$ candidates:

\[
\begin{array}{|c|c|c|c|}
\hline
   \text{Voter }1 & B C A_{1}\dots A_{k-2}&A_{k-1} \dots A_{m-2}\\
\hline
    \text{Voter }2 & C A_1 \dots A_{k-1}&A_{k}\dots A_{m-2} B\\
\hline
\end{array}  
\]

    Under $\beta_k$, if $k>1$ then $C$ is the winner with $2k-1$ points. Voter one can manipulate by voting $B\succ A_m\succ\dots\succ A_1\succ C$. This way $B$ will have $k$ points, and the other candidates strictly less. If $k=1$ then $B$ is the winner by tie-breaking. Voter 2 can manipulate by voting for $A_1$.

    Under $\beta_{m-1}$, $C$ is the winner. Voter 2 has no incentive to manipulate, voter 1 would rather see $B$ win. By \autoref{lem:normalform}, if this is possible then it is possible in the following profile:
    
    \[
\begin{array}{|c|c|c|c|}
\hline
   \text{Voter }1 & B A_{m-2} \dots A_1 C \\
\hline
    \text{Voter }2 & C A_1 \dots  A_{m-2} B \\
\hline
\end{array}  
\]

However, in this profile all candidates are tied with $m-1$ points, and $A_1$ wins the tie, which is worse than $C$ for the manipulator.

\end{proof}

\section{Conclusion}

In this paper we have shown:
\begin{enumerate}
    \item For any choice of $n,m$: $\alpha_i\nps\alpha_j$;
    \item For $n=2,i<j,j\neq m=1$: $\beta_j\pss\beta_i$;
    \item In every other instance, $\beta_i\nps\beta_j$.
\end{enumerate}

These results are negative in nature. Even in the case of two natural, hierarchical families of scoring rules, the notion of Pathak and S\"onmez fails to make a meaningful distinction between their manipulability. The quest for a useful framework for comparing the manipulability of voting rules continues.

\end{document}